\documentclass[a4paper,fleqn,12pt]{cas-dc}

\usepackage[numbers,sort&compress]{natbib}
\usepackage{amsmath,amssymb}
\usepackage{booktabs}
\usepackage{tikz}
\usetikzlibrary{arrows.meta,positioning,calc,fit,shapes.geometric}
\usepackage{graphicx}
\usepackage{xcolor}
\usepackage{algorithm}
\usepackage{algpseudocode}

\graphicspath{{./images/}}

\newcommand{\NonlinearValidationN}{500}

\newtheorem{theorem}{Theorem}
\newtheorem{lemma}[theorem]{Lemma}
\newtheorem{proposition}[theorem]{Proposition}
\newtheorem{corollary}[theorem]{Corollary}
\newdefinition{assumption}{Assumption}
\newdefinition{remark}{Remark}
\newproof{proof}{Proof}

\makeatletter
\def\@opargbegindefinition#1#2#3{%
  \trivlist
  \item[]\noindent
  {\bfseries #1\ #2\ (#3).}\ \upshape
}
\def\@opargbegintheorem#1#2#3{%
  \trivlist
  \let\baselinestretch\@blstr
  \item[]\noindent
  {\bfseries #1\ #2\ (#3).}\ \itshape
}
\makeatother

\begin{document}
\hypersetup{pdftitle={Uniform High-Probability ISS Tubes for Sampled-Data State Estimation},pdfauthor={Jerzy Baranowski},pdfkeywords={state estimation, sampled-data systems, input-to-state stability, high-probability bounds, observer design, nonlinear systems},hypertexnames=false}
\let\WriteBookmarks\relax
\def\floatpagepagefraction{1}
\def\textpagefraction{.001}
\setlength{\mathindent}{0pt}
\setlength{\emergencystretch}{1.5em}

\shorttitle{High-probability ISS tubes for sampled-data estimation}
\shortauthors{Baranowski}

\title[mode=title]{Uniform High-Probability ISS Tubes for Sampled-Data State Estimation}
\tnotemark[1]
\tnotetext[1]{Dedicated to Professor Wojciech Mitkowski on the occasion of his 80th birthday.}

\author[1]{Jerzy Baranowski}[orcid=0000-0003-3313-581X]
\cormark[1]
\ead{jb@agh.edu.pl}
\credit{Conceptualization, Methodology, Formal analysis, Software, Validation, Investigation, Visualization, Writing -- original draft, Writing -- review and editing}

\affiliation[1]{organization={Department of Automatic Control and Robotics, AGH University of Krakow},
            addressline={al. Mickiewicza 30},
            city={Krakow},
            postcode={30-059},
            country={Poland}}
\cortext[1]{Corresponding author}

\begin{abstract}
State estimates used in sampled monitoring and automation need bounds that remain valid between measurements. We develop a finite-horizon input-to-state-stability tube and observer co-design framework for continuous-time observers driven by sampled and held outputs. The sampled-data error model separates process disturbance, sampled measurement noise, and intersample mismatch. A horizon-level disturbance-envelope event is transferred through an ISS estimate to simultaneous containment of the complete error trajectory. Quadratic dissipation inequalities yield ellipsoidal and componentwise tubes, and semidefinite co-design minimizes normalized tube width across the three channels. A structured nonlinear extension preserves known nonlinear channels. Co-design reduces the worst normalized half-width by $31\%$ in a linear compartment benchmark and by a factor of $22.4$ in a flexible-joint benchmark.
\end{abstract}

\begin{keywords}
state estimation \sep sampled-data systems \sep input-to-state stability \sep high-probability bounds \sep observer design \sep nonlinear systems
\end{keywords}

\maketitle
\hypersetup{pdfsubject={arXiv manuscript with appended appendix material},pdfauthor={Jerzy Baranowski}}

\section{Introduction}

State estimation is a central component of monitoring and automation systems in which decisions are made from incomplete, noisy, and delayed information. In process plants, mechatronic devices, networked sensors, and embedded diagnostic modules, the physical state evolves continuously while measurements are acquired at discrete instants. The estimator must therefore propagate information between sensor updates. For alarms, constraint checking, supervision, and diagnostic escalation, the relevant uncertainty often concerns the whole continuous-time evolution over a finite interval, including the intersample parts of the trajectory.

This paper uses the term \emph{tube} for such an interval-wise uncertainty object. Formally, a tube is a time-indexed family of sets around the observer trajectory, for example
\begin{equation}
\begin{aligned}
    \mathcal T(t)=\{x:{}&(x-\hat x(t))^\top P(x-\hat x(t))\\
    &\leq \rho(t)\},\qquad t\in[0,T].
\end{aligned}
\end{equation}
or, in componentwise form, intervals $|x_i(t)-\hat x_i(t)|\leq r_i(t)$. A trajectory is contained in the tube when the corresponding inclusion holds for every $t\in[0,T]$. The probability statement studied here is therefore a simultaneous statement over the full horizon. This contrasts with marginal statements at individual time instants.

The sampled-data setting gives a concrete reason to distinguish these viewpoints. The observer correction uses the most recent sampled output, while the plant state continues to evolve. The estimation error is then driven by process disturbance, sampled measurement noise, and the mismatch between the current output and the held output. A finite-horizon tube for such an observer has to account for these channels together, and it has to state how a disturbance-side probability budget is transferred to an estimation-error bound.

This distinction is operational in measurement and automation systems. A pointwise covariance band may be appropriate for reporting local estimator uncertainty, but an alarm, supervisory constraint, or diagnostic escalation rule often asks whether the unmeasured state could have left an admissible region at any time since the last update. The proposed construction is therefore intended as a robustness-oriented monitoring bound used alongside statistical estimator uncertainty, not as a replacement for Kalman or moving-horizon filtering.

\subsection{Related work and positioning}

Kalman filtering and moving-horizon estimation (MHE) provide widely used stochastic state-estimation tools. Kalman's original filtering paper established the linear Gaussian prediction and filtering framework that underlies the Kalman-filter comparator used later in this paper \cite{kalman1960new}. Haseltine and Rawlings compared extended Kalman filtering and MHE in nonlinear process-estimation settings, which is close to the estimator-comparison role used in the numerical section \cite{haseltine2005critical}. Rao, Rawlings, and Mayne developed stability theory for constrained nonlinear MHE, which motivates the use of constrained MHE as a serious comparator rather than a purely heuristic baseline \cite{rao2003constrained}. Liu formulated MHE for nonlinear systems with bounded uncertainties \cite{liu2013robust}. Knuefer and M\"uller studied robust global asymptotic stability for nonlinear full-information and moving-horizon estimation \cite{knuefer2023mhe}. Schiller et al. constructed a Lyapunov function for robust MHE stability analysis \cite{schiller2023lyapunov}. Alessandri considered robust MHE when the optimization problem is solved imperfectly \cite{alessandri2025imperfect}. Recent outlier-oriented MHE work treats multiple measurement outliers in linear discrete-time systems \cite{liu2024multipleoutliers} and measurement outliers in nonlinear systems \cite{liu2025outliers}. Robust Bayesian inference has also been proposed for MHE, providing a distributional counterpart to deterministic robustness arguments \cite{cao2025robustbayes}. These methods provide the estimator-side comparators used later in the numerical study.

Guaranteed state estimation under bounded uncertainty is developed in set-membership, ellipsoidal, interval, and positive-system approaches. Milanese and Vicino surveyed optimal estimation for dynamic systems with set-membership uncertainty \cite{milanese1991setmembership}. Kurzhanski developed guaranteed state estimation using ellipsoidal techniques \cite{kurzhanski1994guaranteed}, and Chernousko provided a systematic treatment of ellipsoidal state estimation for dynamical systems \cite{chernousko2005ellipsoidal}. Loukkas et al. designed set-membership observers using ellipsoidal invariant sets, which is close in spirit to tube-based guaranteed estimation \cite{loukkas2017setmembership}. Farina and Rinaldi give the standard theory and applications of positive linear systems \cite{farina2000positive}, while Kaczorek gives a complementary systems-theoretic account of positive one- and two-dimensional systems \cite{kaczorek2002positive}. Zhang and Shen survey interval-observer design from the positive-system viewpoint \cite{zhang2023intervalsurvey}. Dinh and Tran give a recent systematic interval-observer construction for linear systems \cite{dinh2025interval}, and Thabet et al. address interval observers for continuous linear systems with unknown inputs and bounded disturbances \cite{thabet2026interval}. These lines of work are relevant because the linear benchmark has a positive physical state space and because the proposed tubes are also guaranteed state-containment objects.

Input-to-state stability (ISS) provides the robustness language used in the present paper to pass from input bounds to estimation-error bounds. Sontag introduced the ISS property as a way to express stability under persistent inputs \cite{sontag1995input}. Sontag and Wang provided equivalent characterizations of ISS \cite{sontag1996characterizations}, and Sontag later summarized the basic concepts and results in a form widely used in nonlinear control \cite{sontag2008iss}. Observer analysis for sampled or limited measurements has developed in parallel. Karafyllis and Kravaris constructed global exponential observers for two nonlinear system classes \cite{karafyllis2012global}. Ahmed-Ali et al. developed global exponential sampled-data observers for nonlinear systems with delayed measurements \cite{ahmedali2013sampled}. Mazenc, Andrieu, and Malisoff considered continuous-discrete observer design for time-varying nonlinear systems \cite{mazenc2015continuous}. Recent work includes sampled-data observers for linear Kuramoto--Sivashinsky systems with non-local output \cite{karafyllis2024kuramoto}, aperiodic-measurement observer designs \cite{huff2024aperiodic}, sample-based nonlinear detectability for discrete-time systems \cite{krauss2025detectability}, and sample-based detectability with continuous-time MHE \cite{krauss2026continuous}. These literatures motivate the assumptions used in Sections~\ref{sec:iss-tubes}--\ref{sec:codesign}; the new step is to combine them into a finite-horizon tube construction for sampled-data observer design.

\subsection{Contribution and organization}

The main contribution is a sampled-data tube construction and observer co-design method for finite-horizon monitoring. A preliminary conference paper introduced the disturbance-envelope interpretation and a linear compartment illustration \cite{baranowski2026pcc}. The present article extends that work in five directions. First, it states the ISS lifting as a horizon-level bridge from a disturbance envelope to simultaneous containment of the complete continuous-time error trajectory. Second, it separates process disturbance, sampled measurement noise, and held-output mismatch in the sampled-data observer error equation. Third, it derives ellipsoidal, Euclidean, and componentwise tubes from quadratic dissipation inequalities. Fourth, it formulates linear and structured nonlinear gain-metric co-design problems that minimize normalized componentwise tube widths. Fifth, it evaluates the construction on a nominally positive compartment model and a flexible-joint benchmark, with continuous-discrete Kalman filtering, extended Kalman filtering, Gaussian MHE, and robust Huber MHE used as estimator-side comparators where appropriate. The Kalman and MHE bands answer a pointwise estimator-uncertainty question, whereas the proposed tube answers a simultaneous containment question over the full finite horizon.

Figure~\ref{fig:theory-map} gives the map of the theoretical construction. The upper part of the diagram shows the abstract route from a disturbance envelope and an ISS estimate to a finite-horizon tube. The lower part shows the sampled-data specialization and the two co-design paths used later in the numerical study.

\begin{figure*}[pos=t]
\centering
\resizebox{0.82\textwidth}{!}{%
\begin{tikzpicture}[
    >=Latex,
    node distance=6.5mm,
    box/.style={draw, rounded corners, thick, align=center, minimum width=48mm, minimum height=9mm, inner sep=2.1mm},
    result/.style={draw, rounded corners, thick, align=center, minimum width=50mm, minimum height=9mm, inner sep=2.1mm, fill=blue!10},
    design/.style={draw, rounded corners, thick, align=center, minimum width=50mm, minimum height=9mm, inner sep=2.1mm, fill=gray!12},
    arr/.style={-Latex, thick}
]

\node[box] (model) {Sampled-data estimation model};
\node[box, below=of model] (iss) {ISS error estimate\\Assumption~\ref{ass:iss}};
\node[box, below=of iss] (env) {Finite-horizon disturbance envelope\\Assumption~\ref{ass:uniform-envelope}};
\node[result, below=of env] (tube) {Uniform finite-horizon tube\\Theorem~\ref{thm:uniform-hp-iss-tube}};
\node[result, below=of tube] (quad) {Quadratic computable tubes\\Proposition~\ref{prop:lyap-to-iss} and Corollary~\ref{cor:uniform-quadratic-tube}};
\node[box, below=of quad] (stale) {Held-output mismatch bound\\Lemma~\ref{lem:output-staleness}};
\node[result, below=of stale] (sampled) {Sampled-data tube\\Theorem~\ref{thm:sampled-data-uniform-tube}};
\node[design, below=of sampled] (design) {Tube-oriented observer co-design};
\node[design, below left=8mm and 10mm of design] (linear) {Linear gain-metric design};
\node[design, below right=8mm and 10mm of design] (nonlinear) {Structured nonlinear design};
\node[design, below=18mm of design] (study) {Numerical study:\\tube width and containment};

\draw[arr] (model) -- (iss);
\draw[arr] (iss) -- (env);
\draw[arr] (env) -- (tube);
\draw[arr] (tube) -- (quad);
\draw[arr] (quad) -- (stale);
\draw[arr] (stale) -- (sampled);
\draw[arr] (sampled) -- (design);
\draw[arr] (design) -- (linear);
\draw[arr] (design) -- (nonlinear);
\draw[arr] (linear.south) |- (study.west);
\draw[arr] (nonlinear.south) |- (study.east);

\end{tikzpicture}%
}
\caption{Map of the theoretical construction. The main vertical chain shows how a sampled-data estimation problem is converted into a finite-horizon error tube. The final branch shows how the sampled-data tube becomes an observer-design objective in the linear and structured nonlinear cases, which are then evaluated numerically.}
\label{fig:theory-map}
\end{figure*}

The paper is organized as follows. Section~\ref{sec:framework} introduces the sampled-data estimation setting and the notation. Section~\ref{sec:iss-tubes} develops the finite-horizon ISS tube. Section~\ref{sec:observer-conditions} gives observer contraction conditions used later. Section~\ref{sec:codesign} specializes the tube to sampled measurements and formulates the observer co-design problems. Section~\ref{sec:numerical} states the numerical objective and presents the two benchmark studies. Section~\ref{sec:discussion} discusses interpretation and limitations, and Section~\ref{sec:conclusions} concludes.

\subsection{Notation and abbreviations}
\label{subsec:notation}

The main symbols are summarized in Table~\ref{tab:notation}. For a symmetric matrix $P$, the relations $P\succ0$ and $P\succeq0$ denote positive definiteness and positive semidefiniteness, respectively, and $\mathbb S_{++}^n$ is the set of real symmetric positive definite $n\times n$ matrices. The Euclidean vector norm and induced spectral matrix norm are denoted by $\|\cdot\|_2$; the subscript is omitted when no ambiguity arises. For a measurable signal $s:[0,T]\to\mathbb R^q$,
\begin{equation}
\|s\|_{\infty,[0,T]}:=\operatorname*{ess\,sup}_{\tau\in[0,T]}\|s(\tau)\|_2.
\end{equation}
The classes $\mathcal K$, $\mathcal K_\infty$, and $\mathcal{KL}$ have their standard ISS meaning. The map $\sigma(t)$ denotes the most recent sampling instant. For a square matrix $M$,
\begin{equation}
\operatorname{He}(M):=M+M^\top.
\end{equation}
Thus, $\operatorname{He}(M)$ is the unscaled symmetric part; some references use $\operatorname{sym}(M)=\operatorname{He}(M)/2$.

\begin{table*}[pos=t]
\caption{Notation and abbreviations used in the sampled-data tube construction.}
\label{tab:notation}
\centering
\begin{tabular}{@{}p{0.22\textwidth}p{0.70\textwidth}@{}}
\toprule
Symbol or abbreviation & Meaning \\
\midrule
ISS & input-to-state stability \\
KF, EKF & Kalman filter and extended Kalman filter \\
MHE & moving-horizon estimation \\
LMI & linear matrix inequality \\
$x,\hat x,e$ & state, state estimate, and estimation error $x-\hat x$ \\
$w,v,\zeta_h$ & process disturbance, measurement noise, and held-output mismatch \\
$L,P$ & observer gain and Lyapunov metric \\
$T,\delta$ & finite horizon and admissible failure probability \\
$\bar d_T(\delta)$ & horizon-level disturbance envelope \\
$\rho(t)$ & Lyapunov-level tube radius \\
$s_i$ & physical scale used to normalize state component $i$ \\
\bottomrule
\end{tabular}
\end{table*}

\begin{figure*}[pos=htbp]
\centering

\resizebox{\textwidth}{!}{%
\begin{tikzpicture}[
    >=Latex,
    font=\small,
    block/.style={draw, rounded corners, thick, align=center, minimum height=10mm, minimum width=24mm},
    sblock/.style={draw, rounded corners, thick, align=center, minimum height=8mm, minimum width=19mm},
    tubeblock/.style={draw, rounded corners, thick, align=center, minimum height=8mm, minimum width=19mm, fill=blue!15},
    ann/.style={align=center},
    line/.style={-Latex, thick},
    dashedline/.style={-Latex, thick, dashed}
]

\node[block] (plant) {Plant\\$\dot{x}(t)=f(x,u,t)+w(t)$};
\node[sblock, right=18mm of plant] (sampler) {Sampler\\$y(t_k)$};
\node[sblock, right=15mm of sampler] (hold) {Sample-and-hold\\$y(\sigma(t))$};
\node[block, right=18mm of hold] (observer) {Observer / estimator\\$\dot{\hat{x}}(t)=\cdots$};

\node[sblock, below=14mm of observer] (error) {Error\\$e(t)=x(t)-\hat{x}(t)$};
\node[tubeblock, left=18mm of error] (tube) {Finite-horizon\\ISS tube};

\draw[line] ($(plant.west)+(-18mm,0)$) -- node[above] {$u(t)$} (plant.west);
\draw[line] (plant.east) -- node[above] {$y(t)$} (sampler.west);
\draw[line] (sampler.east) -- (hold.west);
\draw[line] (hold.east) -- node[above] {$y(\sigma(t))$} (observer.west);
\draw[line] (observer.east) -- ++(18mm,0) node[right,ann] {$\hat{x}(t)$};

\draw[line] (observer.south) -- (error.north);
\draw[line] (error.west) -- (tube.east);

\draw[dashedline] ($(plant.north)+(0,8mm)$) -- node[right] {$w(t)$} (plant.north);
\draw[dashedline] ($(sampler.north)+(0,8mm)$) -- node[right] {$v(t_k)$} (sampler.north);
\draw[dashedline] ($(hold.north)+(0,8mm)$) -- node[right,align=left] {staleness /\\held-output mismatch} (hold.north);

\node[
    draw,
    rounded corners,
    thick,
    align=left,
    anchor=north west,
    inner sep=3mm,
    text width=0.82\textwidth
] at ($(plant.south west)+(-2mm,-25mm)$)
{%
\textbf{(a)} Sampled-data estimation setting. The plant evolves continuously, while
its output is available only at sampling instants $t_k$ and is held between updates.
The observer uses the held measurement $y(\sigma(t))$ to generate $\hat{x}(t)$. The
resulting estimation error is influenced by process disturbance, measurement noise,
and intersample staleness.};

\end{tikzpicture}%
}

\vspace{4mm}

\resizebox{\textwidth}{!}{%
\begin{tikzpicture}[
    >=Latex,
    font=\small,
    line cap=round,
    line join=round
]
    \draw[->, thick] (0,0) -- (13.2,0) node[right] {$t$};
    \draw[->, thick] (0,0) -- (0,4.8);
    \node[anchor=east] at (-0.10,4.55) {signal value};

    \draw[densely dashed] (0.8,0) -- (0.8,4.4);
    \draw[densely dashed] (12.0,0) -- (12.0,4.4);
    \node[below] at (0.8,0) {$0$};
    \node[below] at (12.0,0) {$T$};
    \draw[<->, thick] (0.8,4.45) -- (12.0,4.45);
    \node[above] at (6.4,4.45) {finite horizon $[0,T]$};

    \foreach \x/\lab in {1.2/{t_0},2.8/{t_1},4.4/{t_2},6.0/{t_3},7.6/{t_4},9.2/{t_5},10.8/{t_6}} {
        \draw[densely dotted] (\x,0) -- (\x,3.8);
        \node[below] at (\x,0) {$\lab$};
    }

    \coordinate (L1) at (0.8,1.35);
    \coordinate (L2) at (2.0,1.55);
    \coordinate (L3) at (3.6,2.10);
    \coordinate (L4) at (5.2,2.45);
    \coordinate (L5) at (6.8,2.30);
    \coordinate (L6) at (8.4,2.00);
    \coordinate (L7) at (10.0,1.75);
    \coordinate (L8) at (12.0,1.55);

    \coordinate (U1) at (0.8,2.55);
    \coordinate (U2) at (2.0,2.75);
    \coordinate (U3) at (3.6,3.25);
    \coordinate (U4) at (5.2,3.55);
    \coordinate (U5) at (6.8,3.40);
    \coordinate (U6) at (8.4,3.05);
    \coordinate (U7) at (10.0,2.75);
    \coordinate (U8) at (12.0,2.50);

    \fill[blue!12]
        (L1) .. controls (1.2,1.45) and (1.6,1.45) .. (L2)
             .. controls (2.7,1.70) and (3.0,1.95) .. (L3)
             .. controls (4.2,2.30) and (4.7,2.40) .. (L4)
             .. controls (5.9,2.45) and (6.2,2.35) .. (L5)
             .. controls (7.5,2.20) and (7.8,2.05) .. (L6)
             .. controls (9.0,1.85) and (9.4,1.80) .. (L7)
             .. controls (10.8,1.65) and (11.2,1.58) .. (L8)
             --
        (U8) .. controls (11.0,2.58) and (10.8,2.65) .. (U7)
             .. controls (9.3,2.90) and (9.0,2.95) .. (U6)
             .. controls (7.8,3.22) and (7.4,3.30) .. (U5)
             .. controls (6.1,3.50) and (5.8,3.50) .. (U4)
             .. controls (4.8,3.45) and (4.4,3.35) .. (U3)
             .. controls (3.0,3.00) and (2.6,2.82) .. (U2)
             .. controls (1.6,2.68) and (1.2,2.60) .. (U1)
             -- cycle;

    \draw[very thick, blue]
        (1.0,1.95) .. controls (1.8,2.02) and (2.4,2.10) .. (3.2,2.45)
                  .. controls (4.0,2.75) and (4.8,2.95) .. (5.8,3.00)
                  .. controls (6.8,2.92) and (7.6,2.72) .. (8.4,2.50)
                  .. controls (9.2,2.28) and (10.2,2.18) .. (11.6,2.00);

    \draw[very thick, black]
        (1.0,1.70) .. controls (1.8,1.90) and (2.6,2.35) .. (3.3,2.70)
                  .. controls (4.0,3.00) and (4.8,3.15) .. (5.7,3.22)
                  .. controls (6.5,3.28) and (7.3,3.10) .. (8.0,2.88)
                  .. controls (9.0,2.48) and (10.0,2.18) .. (11.6,1.88);

    \foreach \x/\y in {1.2/1.76,2.8/2.42,4.4/3.05,6.0/3.18,7.6/2.88,9.2/2.36,10.8/2.00} {
        \filldraw[red] (\x,\y) circle (1.4pt);
    }

    \draw[red, thick, dashed]
        (1.2,1.76) -- (2.8,1.76) -- (2.8,2.42) -- (4.4,2.42) --
        (4.4,3.05) -- (6.0,3.05) -- (6.0,3.18) -- (7.6,3.18) --
        (7.6,2.88) -- (9.2,2.88) -- (9.2,2.36) -- (10.8,2.36) --
        (10.8,2.00) -- (12.0,2.00);

    \draw[-Latex, thick] (6.75,2.9) -- (6.75,3.21);
    \node[right] at (6.83,3.0) {$e(t)$};

    \begin{scope}[shift={(0.95,-0.75)}]
        \fill[blue!12] (0,0.08) rectangle (0.50,0.30);
        \draw[blue!45] (0,0.19) -- (0.50,0.19);
        \node[anchor=west] at (0.65,0.19) {ISS tube};

        \draw[very thick, black] (2.80,0.19) -- (3.35,0.19);
        \node[anchor=west] at (3.50,0.19) {true state $x(t)$};

        \draw[very thick, blue] (6.10,0.19) -- (6.65,0.19);
        \node[anchor=west] at (6.80,0.19) {estimate $\hat{x}(t)$};

        \filldraw[red] (9.85,0.19) circle (1.4pt);
        \node[anchor=west] at (10.05,0.19) {sample $y(t_k)$};
    \end{scope}
    \begin{scope}[shift={(4.65,-1.12)}]
        \draw[red, thick, dashed] (0,0.19) -- (0.55,0.19);
        \node[anchor=west] at (0.70,0.19) {held output $y(\sigma(t))$};
    \end{scope}

    \node[
        draw,
        rounded corners,
        thick,
        align=left,
        anchor=north west,
        inner sep=3mm,
        text width=0.82\textwidth
    ] at (-0.05,-1.45)
    {%
    \textbf{(b)} Schematic trajectory view. The true state $x(t)$ and the estimate
    $\hat{x}(t)$ evolve continuously, while measurements are available only at
    sample times $t_k$. The shaded band represents a deterministic or
    high-probability finite-horizon ISS tube around the estimate.};
\end{tikzpicture}%
}

\caption{Sampled-data estimation and finite-horizon ISS tubes. Panel (a) separates process disturbance, measurement noise, and held-output mismatch. Panel (b) illustrates continuous state and estimate trajectories, sampled measurements, and a simultaneous finite-horizon tube.}
\label{fig:conceptual-framework}
\end{figure*}

\section{Sampled-data estimation framework}
\label{sec:framework}

Consider a continuous-time plant
\begin{equation}
\dot x(t)=f(x(t),u(t),t)+w(t),
\qquad x(t)\in\mathbb R^n,
\label{eq:framework-plant}
\end{equation}
with an output map $h:\mathbb R^n\times\mathbb R_{\geq0}\to\mathbb R^m$. Measurements are available at a strictly increasing sequence $0=t_0<t_1<\cdots$,
\begin{equation}
y_k=h(x(t_k),t_k)+v_k,
\qquad h_k=t_{k+1}-t_k\leq \bar h.
\label{eq:framework-measurement}
\end{equation}
Define the most recent sampling instant
\begin{equation}
\sigma(t)=t_k,
\qquad t\in[t_k,t_{k+1}),
\end{equation}
and the zero-order-held measurement $\bar y(t)=y_k$. A continuous-time observer driven by this held signal has the generic form
\begin{equation}
\dot{\hat x}(t)=f(\hat x(t),u(t),t)
+L\bigl(\bar y(t)-h(\hat x(t),t)\bigr).
\label{eq:framework-observer}
\end{equation}
The estimation error is
\begin{equation}
e(t)=x(t)-\hat x(t).
\end{equation}

The held measurement is associated with $x(\sigma(t))$, whereas the plant has already evolved to $x(t)$. The resulting output mismatch is
\begin{equation}
\zeta_h(t)=h(x(t),t)-h(x(\sigma(t)),\sigma(t)).
\label{eq:framework-staleness}
\end{equation}
Substitution of $\bar y(t)=h(x(t),t)-\zeta_h(t)+\bar v(t)$ into \eqref{eq:framework-observer} gives
\begin{equation}
\begin{aligned}
\dot e(t)={}&f(x(t),u(t),t)-f(\hat x(t),u(t),t)\\
&-L\bigl(h(x(t),t)-h(\hat x(t),t)\bigr)\\
&+w(t)-L\bar v(t)+L\zeta_h(t).
\end{aligned}
\label{eq:framework-error}
\end{equation}
Thus the sampled-data implementation creates three distinct forcing channels: $w$, $-L\bar v$, and $L\zeta_h$. Their dependence on $L$ differs, which is central to the co-design problem developed later.

For the abstract analysis, these terms may be collected in a measurable input $d(t)$ and the error dynamics written as
\begin{equation}
\dot e(t)=f_e(e(t),d(t),t).
\label{eq:abstract-error-framework}
\end{equation}
Throughout the analysis, the known input $u$ is arbitrary but admissible and is suppressed from the notation. The statements are conditional on the usual Carath\'eodory well-posedness assumptions: for every locally essentially bounded $u$, every measurable locally essentially bounded $d$, and every initial condition under consideration, the error system has a unique absolutely continuous solution on the stated horizon. The solution map and the disturbance-envelope events are assumed measurable. The next section derives a finite-horizon implication for this abstract system and then specializes it to quadratic and sampled-data constructions.

The error representation in \eqref{eq:framework-error} is the point at which the sampled-data nature of the problem enters the theory. Section~\ref{sec:iss-tubes} first develops the finite-horizon tube for an abstract disturbance input. Section~\ref{sec:observer-conditions} then gives contraction conditions that provide the required deterministic ISS inequality. Section~\ref{sec:codesign} returns to \eqref{eq:framework-error}, treats the held-output mismatch as a separate design channel, and uses the resulting tube radius as an observer-design criterion.

\section{From ISS estimates to finite-horizon error tubes}
\label{sec:iss-tubes}

\subsection{Deterministic ISS bound}

This section starts from the ISS estimate that will be lifted to a finite-horizon tube. In ISS terminology, the aggregated input acting on the error determines an ultimate error bound together with a transient term. Sontag introduced this robustness property for nonlinear systems \cite{sontag1995input}. Sontag and Wang developed equivalent characterizations \cite{sontag1996characterizations}, and Sontag later summarized the role of ISS in nonlinear control \cite{sontag2008iss}. Here, the ISS estimate is used as the deterministic part of a probabilistic tube construction.

Let $e(t)\in\mathbb{R}^n$ denote the estimation error and assume its dynamics can be
written in the abstract form
\begin{equation}
\dot e(t) = f_e(e(t),d(t),t),
\label{eq:abstract-error}
\end{equation}
where $d(t)$ is an aggregated disturbance/noise input. For a measurable signal
$s:[0,t]\to\mathbb{R}^q$ we define the essential supremum norm
\[
\|s\|_{\infty,[0,t]} := \operatorname*{ess\,sup}_{\tau\in[0,t]}\|s(\tau)\|.
\]

\begin{assumption}[Deterministic ISS bound]
\label{ass:iss}
There exist functions $\beta\in\mathcal{KL}$ and $\gamma\in\mathcal{K}$ such that for
all $t\ge 0$,
\begin{equation}
\|e(t)\| \le \beta(\|e(0)\|,t) + \gamma(\|d\|_{\infty,[0,t]}).
\label{eq:iss-bound}
\end{equation}
\end{assumption}

Assumption~\ref{ass:iss} is a compact way of saying that the observer error is robustly stable with respect to the aggregated disturbance channel. Such inequalities are common in observer analysis. They follow immediately for exponentially stable linear error dynamics with additive inputs, arise from quadratic observer LMIs, and also appear in nonlinear observer designs based on high-gain and sampled-data constructions \cite{khalil2014highgain}, \cite{karafyllis2012global}, \cite{ahmedali2013sampled}. Sliding-mode and higher-order sliding-mode observer constructions provide another family of bounded-error or finite-time estimation mechanisms under bounded unknown inputs \cite{levant1998robust}, \cite{niederwieser2021higher}. In this paper the assumption is first used abstractly, and Sections~\ref{sec:observer-conditions} and~\ref{sec:codesign} then give concrete classes of observers and metrics that certify it.

\subsection{Uniform high-probability ISS tubes}

A fixed-time reading of an ISS estimate gives the error bound at one selected time. A tube over an entire interval requires a probability statement about the disturbance behavior throughout that interval. The next assumption expresses this requirement at the input side.

\begin{assumption}[Finite-horizon probabilistic disturbance envelope]
\label{ass:uniform-envelope}
Let $T>0$ and $\delta\in(0,1)$. There exists a deterministic quantity
$\bar d_T(\delta)\geq 0$ such that
\begin{equation}
    \mathbb{P}\left(
        \|d\|_{\infty,[0,T]}
        \leq \bar d_T(\delta)
    \right)
    \geq 1-\delta .
    \label{eq:uniform-disturbance-envelope}
\end{equation}
\end{assumption}

Assumption~\ref{ass:uniform-envelope} has a direct probabilistic reading: over the horizon $[0,T]$, the disturbance magnitude is bounded by the deterministic number $\bar d_T(\delta)$ with probability at least $1-\delta$. The value of $\bar d_T(\delta)$ may increase with $T$, because a longer horizon gives more opportunities for large disturbance or measurement-noise realizations. Once the event in Assumption~\ref{ass:uniform-envelope} occurs, the rest of the tube construction is deterministic.

Under Assumptions~\ref{ass:iss} and~\ref{ass:uniform-envelope}, the following result is the horizon-level lifting step used later in the sampled-data tube and co-design construction. It transfers one finite-horizon disturbance event to the complete estimation-error trajectory.

\begin{theorem}[Uniform ISS tube]
\label{thm:uniform-hp-iss-tube}
Suppose that Assumption~\ref{ass:iss} holds for every admissible disturbance realization
and that Assumption~\ref{ass:uniform-envelope} holds for a given
$T>0$ and $\delta\in(0,1)$. Assume that the initial estimation error $e(0)$
is deterministic. Define
\begin{equation}
\begin{aligned}
    r(\tau;T,\delta):={}&\beta\bigl(\|e(0)\|,\tau\bigr)\\
    &+\gamma\bigl(\bar d_T(\delta)\bigr),
    \quad \tau\in[0,T].
\end{aligned}
\label{eq:time-varying-uniform-radius}
\end{equation}
Then
\begin{equation}
\begin{aligned}
\mathbb{P}\bigl(&\|e(\tau)\|\leq r(\tau;T,\delta),\\
&\text{for all }\tau\in[0,T]\bigr)\geq1-\delta.
\end{aligned}
\label{eq:uniform-hp-tube}
\end{equation}

Consequently, with
\begin{equation}
    r_T(\delta)
    :=
    \sup_{\tau\in[0,T]} r(\tau;T,\delta),
    \label{eq:constant-uniform-radius}
\end{equation}
the estimation error satisfies
\begin{equation}
    \mathbb{P}\left(
        \sup_{\tau\in[0,T]}\|e(\tau)\|
        \leq r_T(\delta)
    \right)
    \geq 1-\delta .
    \label{eq:constant-uniform-tube}
\end{equation}
\end{theorem}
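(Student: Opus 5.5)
The plan is to prove the theorem by event inclusion: identify the single disturbance event supplied by Assumption~\ref{ass:uniform-envelope}, show deterministically that on this event the whole error trajectory lies in the tube, and then use monotonicity of probability. Let
\[
\Omega_T:=\bigl\{\|d\|_{\infty,[0,T]}\leq \bar d_T(\delta)\bigr\}.
\]
By Assumption~\ref{ass:uniform-envelope}, $\mathbb P(\Omega_T)\geq 1-\delta$. The event is measurable by the standing measurability assumptions of Section~\ref{sec:framework}. Everything else will be a pathwise argument carried out on $\Omega_T$.

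\textbf{Pathwise containment.} Fix a realization in $\Omega_T$ and any $\tau\in[0,T]$.
\begin{itemize}
\item Since $[0,\tau]\subseteq[0,T]$, the essential supremum over the smaller interval is no larger: $\|d\|_{\infty,[0,\tau]}\leq\|d\|_{\infty,[0,T]}\leq\bar d_T(\delta)$.
\item Because $\gamma\in\mathcal K$ is nondecreasing, $\gamma(\|d\|_{\infty,[0,\tau]})\leq\gamma(\bar d_T(\delta))$.
\item Assumption~\ref{ass:iss} holds for every admissible realization, so at time $\tau$ it gives $\|e(\tau)\|\leq\beta(\|e(0)\|,\tau)+\gamma(\|d\|_{\infty,[0,\tau]})\leq r(\tau;T,\delta)$.
\item Here $e(0)$ is deterministic, so $r(\cdot;T,\delta)$ is a deterministic function of time.
\end{itemize}
The same realization works for every $\tau$, so $\Omega_T$ is contained in the tube event $E:=\{\|e(\tau)\|\leq r(\tau;T,\delta)\ \forall\tau\in[0,T]\}$. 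It follows that $\mathbb P(E)\geq\mathbb P(\Omega_T)\geq1-\delta$, which is \eqref{eq:uniform-hp-tube}.

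\textbf{Main obstacle: measurability of $E$.} The event $E$ involves an uncountable intersection over $\tau$, so it is not automatically measurable. I would handle this as follows.
\begin{itemize}
\item Solutions are absolutely continuous, hence continuous in $\tau$.
\item $\beta(s,\cdot)$ is continuous for each fixed $s$, as is standard for $\mathcal{KL}$ functions, so $r(\cdot;T,\delta)$ is continuous.
\item $E$ can therefore be written as a countable intersection over rational $\tau\in[0,T]$ together with $\tau=T$, which is measurable.
\item As a fallback, one can state the result with inner probability, or simply appeal to the standing assumption that the solution map is measurable.
\end{itemize}

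\textbf{Constant-radius corollary.} Since $\beta(\|e(0)\|,\cdot)$ is nonincreasing in $\tau$, $r_T(\delta)=\beta(\|e(0)\|,0)+\gamma(\bar d_T(\delta))<\infty$. On $E$ we have $\|e(\tau)\|\leq r(\tau;T,\delta)\leq r_T(\delta)$ for all $\tau\in[0,T]$, hence $\sup_{\tau}\|e(\tau)\|\leq r_T(\delta)$. Thus $E$ is contained in the event in \eqref{eq:constant-uniform-tube}, and that bound follows by monotonicity of probability once more.
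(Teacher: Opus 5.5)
Your proposal is correct and follows the paper's proof essentially step for step: fix an outcome in the horizon-level event $\mathcal E_{T,\delta}$, restrict the essential supremum to $[0,\tau]$, apply the ISS estimate together with monotonicity of $\gamma$, conclude by event inclusion, and take the supremum over $\tau$ for the constant-radius bound. Your extra argument for measurability of the tube event is a sound refinement that the paper leaves implicit in its standing measurability assumption; it uses continuity of $e$ and of $\beta(s,\cdot)$ to reduce the event to a countable intersection over rational times.
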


The theorem is the probabilistic lifting used throughout the paper. Once the horizon-level input event occurs, the ISS estimate gives containment of the complete error trajectory on the same horizon. Pointwise covariance or Laplace bands enter later as estimator-side comparators and answer a different uncertainty question.

\begin{proof}
Define the horizon-level event
\begin{equation}
    \mathcal{E}_{T,\delta}
    :=
    \left\{
        \omega:
        \|d(\cdot,\omega)\|_{\infty,[0,T]}
        \leq \bar d_T(\delta)
    \right\}.
\end{equation}
By Assumption~\ref{ass:uniform-envelope},
$\mathbb{P}(\mathcal{E}_{T,\delta})\geq 1-\delta$.

Fix an arbitrary outcome $\omega\in\mathcal{E}_{T,\delta}$ and an arbitrary
time $\tau\in[0,T]$. Since $[0,\tau]\subseteq[0,T]$,
\begin{equation}
    \|d(\cdot,\omega)\|_{\infty,[0,\tau]}
    \leq
    \|d(\cdot,\omega)\|_{\infty,[0,T]}
    \leq \bar d_T(\delta).
\end{equation}
Applying the deterministic ISS estimate gives
\begin{align}
    \|e(\tau,\omega)\|
    &\leq
    \beta\bigl(\|e(0)\|,\tau\bigr)
    +
    \gamma\left(
        \|d(\cdot,\omega)\|_{\infty,[0,\tau]}
    \right)
    \\
    &\leq
    \beta\bigl(\|e(0)\|,\tau\bigr)
    +
    \gamma\bigl(\bar d_T(\delta)\bigr)
    =
    r(\tau;T,\delta).
\end{align}
Because $\tau\in[0,T]$ was arbitrary, this inequality holds simultaneously
for all $\tau\in[0,T]$ on the event $\mathcal{E}_{T,\delta}$. Hence
\begin{equation}
\begin{aligned}
    \mathcal E_{T,\delta}\subseteq
    \{\omega:{}&\|e(\tau,\omega)\|\leq r(\tau;T,\delta),\\
    &\text{for all }\tau\in[0,T]\}.
\end{aligned}
\end{equation}
which proves \eqref{eq:uniform-hp-tube}. Taking the supremum over
$\tau\in[0,T]$ proves \eqref{eq:constant-uniform-tube}.
\end{proof}

\begin{corollary}[Random initial-error envelope]
\label{cor:random-initial-envelope}
Suppose that Assumption~\ref{ass:iss} holds for every admissible realization. Let $T>0$, $r_0\geq0$, $\bar d_T\geq0$, and let $\delta_0,\delta_d\in(0,1)$ satisfy $\delta_0+\delta_d<1$. If
\begin{align}
\mathbb P\left(\|e(0)\|\leq r_0\right)&\geq1-\delta_0,\\
\mathbb P\left(\|d\|_{\infty,[0,T]}\leq \bar d_T\right)&\geq1-\delta_d,
\end{align}
then
\begin{equation}
\begin{aligned}
\mathbb P\bigl(&\|e(\tau)\|\leq \beta(r_0,\tau)+\gamma(\bar d_T),\\
&\text{for all }\tau\in[0,T]\bigr)
\geq 1-\delta_0-\delta_d .
\end{aligned}
\end{equation}
If a joint event $\{\|e(0)\|\leq r_0,\ \|d\|_{\infty,[0,T]}\leq\bar d_T\}$ is available with probability at least $1-\delta$, the same conclusion holds with $1-\delta$ in place of $1-\delta_0-\delta_d$.
\end{corollary}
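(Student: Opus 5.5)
The plan is to repeat the event-inclusion argument from the proof of Theorem~\ref{thm:uniform-hp-iss-tube}. The new ingredient is that the initial error is also random, so we work on the intersection of two good events and control its probability with a union bound. Define
\begin{equation}
\mathcal E_0:=\{\omega:\|e(0,\omega)\|\leq r_0\},\qquad
\mathcal E_d:=\{\omega:\|d(\cdot,\omega)\|_{\infty,[0,T]}\leq \bar d_T\}.
\end{equation}
By hypothesis $\mathbb P(\mathcal E_0^c)\leq\delta_0$ and $\mathbb P(\mathcal E_d^c)\leq\delta_d$. Subadditivity then gives $\mathbb P(\mathcal E_0\cap\mathcal E_d)\geq 1-\delta_0-\delta_d$. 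The condition $\delta_0+\delta_d<1$ only ensures this lower bound is nontrivial.

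Next, fix $\omega\in\mathcal E_0\cap\mathcal E_d$ and any $\tau\in[0,T]$. Assumption~\ref{ass:iss} holds realization-wise, so it applies to this deterministic realization with its own initial value $e(0,\omega)$. Since $[0,\tau]\subseteq[0,T]$, we have $\|d(\cdot,\omega)\|_{\infty,[0,\tau]}\leq\bar d_T$. Monotonicity of $\gamma\in\mathcal K$ gives $\gamma(\|d\|_{\infty,[0,\tau]})\leq\gamma(\bar d_T)$. Monotonicity of $\beta(\cdot,\tau)$ in its first argument, which holds because $\beta\in\mathcal{KL}$, gives $\beta(\|e(0,\omega)\|,\tau)\leq\beta(r_0,\tau)$. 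Hence $\|e(\tau,\omega)\|\leq\beta(r_0,\tau)+\gamma(\bar d_T)$.

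Because $\tau$ was arbitrary, this bound holds simultaneously on $[0,T]$. Therefore $\mathcal E_0\cap\mathcal E_d$ is contained in the tube event, and its probability is at least $1-\delta_0-\delta_d$. This inclusion uses the measurability of the solution map assumed in Section~\ref{sec:framework}, which makes the tube event measurable (or one can argue via inner probability). For the joint-event version, replace $\mathcal E_0\cap\mathcal E_d$ by the given joint event of probability at least $1-\delta$. The same pointwise argument applies and no union bound is needed.

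The only delicate point is the second step. We must use that the ISS estimate holds for every realization, and not merely for deterministic $e(0)$. We also need the $\beta$-monotonicity in the first argument, which Theorem~\ref{thm:uniform-hp-iss-tube} did not require. Both follow from the hypothesis and the definition of $\mathcal{KL}$, so no genuine obstacle is expected.
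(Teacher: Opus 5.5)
Your proof is correct and follows the paper's own argument. You work on the intersection of the initial-error and disturbance events and use monotonicity of $\beta$ in its first argument and of $\gamma$ to obtain the bound simultaneously for all $\tau\in[0,T]$. The union bound then lower-bounds the probability of that intersection, and the joint-event version reuses the same deterministic implication.
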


\begin{proof}
On the intersection of the two events, monotonicity of $\beta$ in its first argument and monotonicity of $\gamma$ give, for every $\tau\in[0,T]$,
\begin{equation}
\|e(\tau)\|\leq \beta(\|e(0)\|,\tau)+\gamma(\|d\|_{\infty,[0,\tau]})
\leq \beta(r_0,\tau)+\gamma(\bar d_T).
\end{equation}
The probability of this intersection is at least $1-\delta_0-\delta_d$ by the union bound. The joint-event version follows directly from the same deterministic implication.
\end{proof}

\begin{remark}[Time-varying tube]
The function $r(\tau;T,\delta)$ defines a time-varying simultaneous tube.
The scalar $r_T(\delta)$ defines a constant-radius tube and is generally more
conservative. Since $\beta(r,\tau)$ is nonincreasing in its second argument,
\begin{equation}
    r_T(\delta)
    =
    \beta\bigl(\|e(0)\|,0\bigr)
    +
    \gamma\bigl(\bar d_T(\delta)\bigr).
\end{equation}
The time-varying representation is less conservative and is used in the numerical illustrations.
\end{remark}

\begin{remark}[Fixed-time relation]
For any selected $t\in[0,T]$, the fixed-time high-probability estimate follows
immediately from Theorem~\ref{thm:uniform-hp-iss-tube}. The converse requires additional dependence information: separate fixed-time probability statements alone are insufficient for a simultaneous guarantee over an entire interval.
\end{remark}

\begin{corollary}[Fixed-time ISS bound]
\label{cor:fixed-time-hp-tube}
Let $t>0$ and $\delta\in(0,1)$. If Assumption~\ref{ass:iss} holds and there exists
$\bar d_t(\delta)\geq0$ such that
\begin{equation}
    \mathbb{P}\left(
        \|d\|_{\infty,[0,t]}\leq \bar d_t(\delta)
    \right)\geq1-\delta,
\end{equation}
then
\begin{equation}
    \mathbb{P}\left(
        \|e(t)\|
        \leq
        \beta(\|e(0)\|,t)+\gamma(\bar d_t(\delta))
    \right)\geq1-\delta.
\end{equation}
\end{corollary}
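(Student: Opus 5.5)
The plan is to argue exactly as in the proof of Theorem~\ref{thm:uniform-hp-iss-tube}, specialized to the single time instant $t$. Let
\[
\mathcal E_t:=\bigl\{\omega:\ \|d(\cdot,\omega)\|_{\infty,[0,t]}\le \bar d_t(\delta)\bigr\}.
\]
The hypothesis of the corollary gives $\mathbb P(\mathcal E_t)\ge1-\delta$. As in the theorem, $e(0)$ is taken deterministic, or at least such that the ISS estimate \eqref{eq:iss-bound} holds for every realization.

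Next I would fix $\omega\in\mathcal E_t$ and apply Assumption~\ref{ass:iss} at time $t$ along that realization. This gives
\[
\|e(t,\omega)\|\le\beta(\|e(0)\|,t)+\gamma\bigl(\|d(\cdot,\omega)\|_{\infty,[0,t]}\bigr).
\]
Because $\gamma\in\mathcal K$ is nondecreasing, the right-hand side is at most $\beta(\|e(0)\|,t)+\gamma(\bar d_t(\delta))$. It follows that $\mathcal E_t$ is contained in the event $\{\|e(t)\|\le\beta(\|e(0)\|,t)+\gamma(\bar d_t(\delta))\}$, and monotonicity of $\mathbb P$ gives the claimed bound.

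An alternative route is to invoke Theorem~\ref{thm:uniform-hp-iss-tube} directly with horizon $T:=t$ and envelope $\bar d_T(\delta):=\bar d_t(\delta)$. With these choices Assumption~\ref{ass:uniform-envelope} holds verbatim. Evaluating \eqref{eq:uniform-hp-tube} at $\tau=t$ then only shrinks the event, so its probability is still at least $1-\delta$.

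There is no real obstacle here. The only points needing care are measurability of the event $\{\|e(t)\|\le\cdot\}$, which is covered by the standing well-posedness and measurability assumptions of Section~\ref{sec:framework}, and the use of monotonicity of $\gamma$.
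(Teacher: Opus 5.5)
Your proposal is correct: the alternative route (apply Theorem~\ref{thm:uniform-hp-iss-tube} with $T=t$ and read off the bound at $\tau=t$) is exactly the paper's proof, and your main route just unfolds that theorem's proof at a single time. One wording fix: passing from the simultaneous event to the fixed-time event \emph{enlarges} the event, since the former is a subset of the latter, and that is why the probability remains at least $1-\delta$; it does not shrink it.
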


\begin{proof}
Apply Theorem~\ref{thm:uniform-hp-iss-tube} with $T=t$ and evaluate the simultaneous bound at the terminal time.
\end{proof}

\subsection{Quadratic Lyapunov tubes}

Theorem~\ref{thm:uniform-hp-iss-tube} reduces probabilistic tube guarantees to verifying a deterministic ISS estimate. We now construct such estimates with quadratic Lyapunov functions, because this form leads directly to ellipsoidal and componentwise tubes and later to LMI-based observer design.

\medskip\noindent\textbf{Aggregated disturbance.}
In what follows, we aggregate process disturbances and measurement noise into a single
input $d(t)$ defined as
\begin{equation}
d(t) := \begin{bmatrix} w(t) \\ v(t) \end{bmatrix},
\end{equation}
equipped with the weighted norm
\begin{equation}
\|d(t)\|^2 := \|w(t)\|^2 + \rho^2 \|v(t)\|^2,
\label{eq:d-weighted-norm}
\end{equation}
where $\rho>0$ is a fixed weighting parameter.

The following proposition is the computational bridge between the abstract ISS bound and the later tube-design LMIs.

\begin{proposition}[Quadratic ISS condition]
\label{prop:lyap-to-iss}
Assume there exists a symmetric matrix $P\succ 0$ and constants $a>0$, $b>0$ such that
the function $V(e)=e^\top P e$ satisfies
\begin{equation}
\begin{aligned}
\dot V(e(t))&\le -a V(e(t)) + b\|d(t)\|^2,\\
&\hspace{6mm}\text{for almost all } t\ge 0,
\end{aligned}
\label{eq:lyap-diss}
\end{equation}
where $\|\cdot\|$ is defined in \eqref{eq:d-weighted-norm}.

Then Assumption~\ref{ass:iss} holds with the explicit functions
\begin{align}
\beta(r,t)&=\sqrt{\frac{\lambda_{\max}(P)}{\lambda_{\min}(P)}}
e^{-at/2}r,\nonumber\\
\gamma(s)&=\sqrt{\frac{b}{a\lambda_{\min}(P)}}s.
\label{eq:beta-gamma}
\end{align}
\end{proposition}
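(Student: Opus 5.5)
The plan is a comparison-lemma (Grönwall) argument applied to the dissipation inequality \eqref{eq:lyap-diss}, followed by the Rayleigh-quotient sandwich
\[
\lambda_{\min}(P)\|e\|^2\le V(e)\le\lambda_{\max}(P)\|e\|^2 .
\]
Fix $t\ge0$ and write $D:=\|d\|_{\infty,[0,t]}$, with the weighted norm \eqref{eq:d-weighted-norm}. Then for almost all $s\in[0,t]$ we have $\|d(s)\|^2\le D^2$, so \eqref{eq:lyap-diss} gives
\[
\dot V(e(s))\le -aV(e(s))+bD^2 \qquad \text{for a.a. } s\in[0,t].
\]

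Since $e$ is absolutely continuous (by the Carath\'eodory well-posedness standing assumption), $s\mapsto V(e(s))$ is absolutely continuous. It follows that $\phi(s):=e^{as}V(e(s))$ is absolutely continuous, with $\dot\phi(s)\le e^{as}bD^2$ almost everywhere. Integrating from $0$ to $t$, which is legitimate for absolutely continuous functions, gives
\[
V(e(t))\le e^{-at}V(e(0))+\frac{b}{a}\bigl(1-e^{-at}\bigr)D^2\le e^{-at}V(e(0))+\frac{b}{a}D^2 .
\]
This is the one step needing care: the inequality holds only almost everywhere, and we must use the fact that $V(e(\cdot))$ is absolutely continuous, not merely differentiable almost everywhere. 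That is what makes the fundamental theorem of calculus apply. I expect no other obstacle.

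Now apply the sandwich bounds:
\[
\lambda_{\min}(P)\|e(t)\|^2\le \lambda_{\max}(P)e^{-at}\|e(0)\|^2+\frac{b}{a}D^2 .
\]
Dividing by $\lambda_{\min}(P)$ and using $\sqrt{x+y}\le\sqrt x+\sqrt y$ for $x,y\ge0$ yields
\[
\|e(t)\|\le\sqrt{\lambda_{\max}(P)/\lambda_{\min}(P)}\,e^{-at/2}\|e(0)\|+\sqrt{b/(a\lambda_{\min}(P))}\,D .
\]
This is exactly \eqref{eq:iss-bound} with $\beta,\gamma$ as in \eqref{eq:beta-gamma}.

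Finally, check the comparison classes. The function $\gamma$ is linear with a positive coefficient, hence in $\mathcal K$ (indeed $\mathcal K_\infty$). For each fixed $t$, $\beta(\cdot,t)$ is linear and increasing with $\beta(0,t)=0$; for each fixed $r$, $\beta(r,\cdot)$ is decreasing to $0$ because $a>0$. Hence $\beta\in\mathcal{KL}$, which confirms Assumption~\ref{ass:iss}.
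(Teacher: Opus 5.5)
Your proposal is correct and follows the paper's proof essentially step for step: the integrating factor $e^{at}$, bounding the disturbance by its essential supremum (you do this before integrating, the paper after), the eigenvalue sandwich $\lambda_{\min}(P)\|e\|^2\le V\le\lambda_{\max}(P)\|e\|^2$, and $\sqrt{x+y}\le\sqrt x+\sqrt y$. Your remarks on absolute continuity of $V(e(\cdot))$ and on $\beta\in\mathcal{KL}$, $\gamma\in\mathcal K$ make explicit details the paper leaves implicit.
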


\begin{proof}
Let $V(t):=V(e(t))$. From \eqref{eq:lyap-diss},
\[
\dot V(t) \le -a V(t) + b\|d(t)\|^2.
\]
Multiply both sides by $e^{at}$ and integrate from $0$ to $t$:
\begin{align*}
\frac{d}{dt}\bigl(e^{at}V(t)\bigr)
&\leq b e^{at}\|d(t)\|^2,\\
e^{at}V(t)-V(0)
&\leq b\int_0^t e^{a\tau}\|d(\tau)\|^2\,d\tau.
\end{align*}
Hence,
\begin{align}
V(t) &\le e^{-at}V(0) + b\int_0^t e^{-a(t-\tau)}\|d(\tau)\|^2\,d\tau \nonumber\\
&\le e^{-at}V(0) + b\|d\|_{\infty,[0,t]}^2\int_0^t e^{-a(t-\tau)}\,d\tau \nonumber\\
&= e^{-at}V(0) + \frac{b}{a}\big(1-e^{-at}\big)\|d\|_{\infty,[0,t]}^2 \nonumber\\
&\le e^{-at}V(0) + \frac{b}{a}\|d\|_{\infty,[0,t]}^2.
\label{eq:V-bound}
\end{align}

Using the eigenvalue bounds
\[
\lambda_{\min}(P)\|e(t)\|^2 \le V(t)\le \lambda_{\max}(P)\|e(t)\|^2,
\]
we obtain from \eqref{eq:V-bound} that
\begin{align*}
\lambda_{\min}(P)\|e(t)\|^2
&\le e^{-at}\lambda_{\max}(P)\|e(0)\|^2\\
&\quad+\frac{b}{a}\|d\|_{\infty,[0,t]}^2.
\end{align*}
Taking square roots and using $\sqrt{\alpha+\beta}\le \sqrt{\alpha}+\sqrt{\beta}$ yields
\begin{align*}
\|e(t)\|\le{}&
\sqrt{\frac{\lambda_{\max}(P)}{\lambda_{\min}(P)}}e^{-at/2}\|e(0)\|\\
&+\sqrt{\frac{b}{a\lambda_{\min}(P)}}\|d\|_{\infty,[0,t]}.
\end{align*}
This is precisely \eqref{eq:iss-bound} with $\beta,\gamma$ given by \eqref{eq:beta-gamma}.
\end{proof}

The next corollary applies the finite-horizon lifting to this quadratic estimate and gives the tube radii used in the numerical sections.

\begin{corollary}[Quadratic high-probability tube]
\label{cor:uniform-quadratic-tube}
Suppose that the conditions of Proposition~\ref{prop:lyap-to-iss} hold and that
Assumption~\ref{ass:uniform-envelope} is satisfied. Define
\begin{align}
    \rho_V(\tau;T,\delta)
    :={}&e^{-a\tau}V(e(0))\\
    &+\frac{b}{a}\bigl(1-e^{-a\tau}\bigr)\bar d_T^2(\delta),
    \quad \tau\in[0,T].
    \label{eq:uniform-rho-v}
\end{align}
Then
\begin{equation}
\begin{aligned}
\mathbb{P}\bigl(&e(\tau)^\top Pe(\tau)\leq\rho_V(\tau;T,\delta),\\
&\text{for all }\tau\in[0,T]\bigr)\geq1-\delta.
\end{aligned}
\label{eq:uniform-ellipsoidal-tube}
\end{equation}
Consequently,
\begin{equation}
\begin{aligned}
\mathbb{P}\biggl(&\|e(\tau)\|\leq
\sqrt{\frac{\rho_V(\tau;T,\delta)}{\lambda_{\min}(P)}},\\
&\text{for all }\tau\in[0,T]\biggr)\geq1-\delta.
\end{aligned}
\label{eq:uniform-euclidean-quadratic-tube}
\end{equation}
Moreover, the same horizon-level event gives the joint componentwise box statement
\begin{equation}
\begin{aligned}
\mathbb{P}\bigl(&|e_i(\tau)|\leq
\sqrt{(P^{-1})_{ii}\rho_V(\tau;T,\delta)},\\
&i=1,\ldots,n,\ \tau\in[0,T]\bigr)\geq1-\delta.
\end{aligned}
\label{eq:uniform-componentwise-tube}
\end{equation}
Each individual componentwise statement follows immediately, but no additional union bound over the state components is required.
\end{corollary}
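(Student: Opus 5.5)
The plan mirrors the proof of Theorem~\ref{thm:uniform-hp-iss-tube}, but works at the Lyapunov level rather than through the norm-level functions $\beta,\gamma$ of Proposition~\ref{prop:lyap-to-iss}. The point is to keep the sharper factor $(1-e^{-a\tau})$ and the exact initial level $V(e(0))$.

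\textbf{Step 1: the deterministic Lyapunov bound.} For every realization and every $\tau\geq0$, the comparison argument in the proof of Proposition~\ref{prop:lyap-to-iss}, before its last relaxation, already gives
\[
V(e(\tau))\le e^{-a\tau}V(e(0))+\frac{b}{a}\bigl(1-e^{-a\tau}\bigr)\|d\|_{\infty,[0,\tau]}^2 .
\]
This step needs $e^{at}V(t)$ to be absolutely continuous, so that the differential inequality, which holds only almost everywhere, can be integrated. That follows from the Carath\'eodory well-posedness assumption: $e$ is absolutely continuous and $V$ is smooth.

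\textbf{Step 2: restrict to the envelope event.} Let $\mathcal E_{T,\delta}$ be the event $\{\|d\|_{\infty,[0,T]}\le\bar d_T(\delta)\}$. By Assumption~\ref{ass:uniform-envelope} it has probability at least $1-\delta$. Fix $\omega$ in this event and any $\tau\in[0,T]$. Monotonicity of the ess-sup norm under restriction of the interval gives $\|d\|_{\infty,[0,\tau]}\le\bar d_T(\delta)$. Since $1-e^{-a\tau}\ge0$, substituting into the bound of Step~1 yields $e(\tau)^\top Pe(\tau)\le\rho_V(\tau;T,\delta)$. This holds simultaneously for all $\tau$ on the same event, so the event inclusion argument of Theorem~\ref{thm:uniform-hp-iss-tube} gives \eqref{eq:uniform-ellipsoidal-tube}.

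\textbf{Step 3: the Euclidean tube.} This is a purely deterministic consequence on the same event. From $\lambda_{\min}(P)\|e\|^2\le e^\top Pe\le\rho_V$ we get \eqref{eq:uniform-euclidean-quadratic-tube}.

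\textbf{Step 4: the componentwise tube.} This is where some care is needed. The task is to show
\[
\max\{|c^\top e| : e^\top Pe\le\rho\}=\sqrt{c^\top P^{-1}c\,\rho}.
\]
I will do it by Cauchy--Schwarz in the $P$-inner product: write $c^\top e=(P^{-1/2}c)^\top(P^{1/2}e)$, so that $|c^\top e|\le\|P^{-1/2}c\|\,\|P^{1/2}e\|$. Taking $c$ as the $i$-th unit vector gives $|e_i(\tau)|\le\sqrt{(P^{-1})_{ii}\rho_V(\tau;T,\delta)}$.

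All $n$ componentwise inequalities, at all times, are deterministic consequences of the single ellipsoidal inclusion on the single event $\mathcal E_{T,\delta}$. The joint box event therefore contains $\mathcal E_{T,\delta}$, which is why no union bound over components is needed. Individual component statements follow by monotonicity of probability.

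The main obstacle is mild. It lies in Step~1, in justifying the integration of the almost-everywhere dissipation inequality and in keeping the sharper $(1-e^{-a\tau})$ form. The remaining steps are set inclusions.
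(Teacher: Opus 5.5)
Your proposal is correct and follows essentially the same route as the paper. Both integrate the dissipation inequality with the factor $e^{at}$, bound the forcing term on the single envelope event simultaneously for all $\tau\in[0,T]$, and then deduce the Euclidean and componentwise tubes deterministically from $e^\top Pe\geq\lambda_{\min}(P)\|e\|^2$ and the ellipsoid projection bound; the only cosmetic differences are that you pass through $\|d\|_{\infty,[0,\tau]}$ before restricting to the event, and you derive the projection bound via Cauchy--Schwarz in the $P$-inner product where the paper quotes $\max_{e^\top Pe\leq\rho}e_i^2=\rho(P^{-1})_{ii}$ directly.
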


\begin{proof}
Multiplying the differential inequality
\[
    \dot V(t)\leq -aV(t)+b\|d(t)\|^2
\]
by $e^{at}$ and integrating from $0$ to $\tau$ gives
\begin{equation}
    V(\tau)
    \leq
    e^{-a\tau}V(0)
    +
    b\int_0^\tau
        e^{-a(\tau-s)}\|d(s)\|^2\,ds.
\end{equation}
On the event
$\{\|d\|_{\infty,[0,T]}\leq\bar d_T(\delta)\}$, this yields, simultaneously
for every $\tau\in[0,T]$,
\begin{align}
    V(\tau)
    &\leq
    e^{-a\tau}V(0)
    +
    b\bar d_T^2(\delta)
    \int_0^\tau e^{-a(\tau-s)}\,ds
    \\
    &=
    e^{-a\tau}V(0)
    +
    \frac{b}{a}
    \left(1-e^{-a\tau}\right)
    \bar d_T^2(\delta).
\end{align}
Equation~\eqref{eq:uniform-euclidean-quadratic-tube} follows from
$e^\top Pe\geq\lambda_{\min}(P)\|e\|^2$.

For the componentwise result, the maximum of $e_i^2$ over the ellipsoid
$\{e:e^\top Pe\leq\rho\}$ is
$\rho(P^{-1})_{ii}$. Substitution of
$\rho=\rho_V(\tau;T,\delta)$ gives
\eqref{eq:uniform-componentwise-tube}.
\end{proof}

For a constant-radius version, define
\begin{equation}
    \rho_{V,T}(\delta)
    :=
    \max_{\tau\in[0,T]}
    \rho_V(\tau;T,\delta).
    \label{eq:rho-v-horizon}
\end{equation}
Then
\begin{equation}
    \mathbb{P}\left(
        \sup_{\tau\in[0,T]}\|e(\tau)\|
        \leq
        \sqrt{
            \frac{\rho_{V,T}(\delta)}
                 {\lambda_{\min}(P)}
        }
    \right)
    \geq 1-\delta .
    \label{eq:uniform-quadratic-constant-radius}
\end{equation}
Writing
\begin{align*}
S_T(\delta)&:=\frac{b}{a}\bar d_T^2(\delta),\\
R_T(\delta)&:=e^{-aT}V(e(0))+
\bigl(1-e^{-aT}\bigr)S_T(\delta),
\end{align*}
the finite-horizon maximum is
\begin{equation}
\rho_{V,T}(\delta)=
\begin{cases}
V(e(0)),&V(e(0))\geq S_T(\delta),\\
R_T(\delta),&V(e(0))<S_T(\delta).
\end{cases}
\label{eq:explicit-rho-v-horizon}
\end{equation}

\section{Observer contraction conditions in constant metrics}
\label{sec:observer-conditions}

The preceding results require a deterministic ISS inequality or, more specifically, a quadratic dissipation inequality. This section records two sufficient constructions. The first is a general incremental-dissipativity statement. The second exploits a known nonlinear channel and is the condition used in the flexible-joint study.

\subsection{A conservative full-norm sufficient condition}

Consider the plant--observer pair
\begin{align}
\dot x(t) &= f(x(t),u(t),t) + w(t), \label{eq:plant}\\
y(t) &= h(x(t),t) + v(t), \label{eq:meas}\\
\dot{\hat x}(t) &= f(\hat x(t),u(t),t) + L\big(y(t)-h(\hat x(t),t)\big), \label{eq:observer}
\end{align}
where $x(t),\hat x(t)\in\mathbb{R}^n$, $y(t)\in\mathbb{R}^m$, and $w(t),v(t)$ are
measurable disturbance/noise signals. Define the estimation error $e(t)=x(t)-\hat x(t)$.
Subtracting \eqref{eq:observer} from \eqref{eq:plant} and using \eqref{eq:meas} yields
\begin{align}
\dot e(t)
={}&f(x(t),u(t),t)-f(\hat x(t),u(t),t)\nonumber\\
&-L\bigl(h(x(t),t)-h(\hat x(t),t)\bigr)\nonumber\\
&+w(t)-Lv(t).
\label{eq:error-dynamics}
\end{align}

The following condition is included to connect the abstract ISS assumption with a directly checkable nonlinear inequality. This is a classical quadratic route to ISS and is used here as a sufficient condition linking the abstract assumption to a checkable inequality. It is deliberately conservative: the unforced plant flow must already be incrementally dissipative in the selected metric, while the output injection is bounded in magnitude and can only consume part of that margin. The flexible-joint design later uses the structured condition in Section~\ref{sec:structured-contraction}, which exploits the known nonlinear channel.

\begin{corollary}[Constant-metric ISS condition]
\label{cor:incremental-full}
Assume that there exist a symmetric matrix $P\succ 0$, constants $\lambda>0$ and
$L_h\ge 0$ such that for all $x,\hat x\in\mathbb{R}^n$ and all $t$:
\begin{enumerate}
\item[(A1)] (Incremental dissipativity)
\begin{equation}
(x-\hat x)^\top P\big(f(x,u,t)-f(\hat x,u,t)\big)
\le -\lambda \|x-\hat x\|^2 .
\label{eq:A1}
\end{equation}
\item[(A2)] (Lipschitz output map)
\begin{equation}
\|h(x,t)-h(\hat x,t)\| \le L_h \|x-\hat x\| .
\label{eq:A2}
\end{equation}
\item[(A3)] (Gain compatibility)
\begin{equation}
\|PL\|\,L_h \le \lambda/2 ,
\label{eq:A3}
\end{equation}
where $\|\cdot\|$ denotes the induced operator norm consistent with the vector norm
$\|\cdot\|$.
\end{enumerate}
Then the estimation error system \eqref{eq:error-dynamics} is input-to-state stable with
respect to the aggregated disturbance $d(t)=(w(t),v(t))$ in the following explicit sense:
there exist $\beta\in\mathcal{KL}$ and $\gamma\in\mathcal{K}$ such that for all $t\ge 0$,
\[
\|e(t)\| \le \beta(\|e(0)\|,t) + \gamma\big(\|w\|_{\infty,[0,t]}+\|v\|_{\infty,[0,t]}\big).
\]
\end{corollary}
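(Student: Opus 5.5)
The plan is to verify a quadratic dissipation inequality of the form \eqref{eq:lyap-diss} for $V(e)=e^\top Pe$ along \eqref{eq:error-dynamics}. Proposition~\ref{prop:lyap-to-iss} then supplies explicit $\beta$ and $\gamma$, and a final norm conversion produces the stated input argument $\|w\|_{\infty,[0,t]}+\|v\|_{\infty,[0,t]}$.

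\textbf{Step 1: differentiate $V$ and bound the plant and output-injection terms.} Along solutions, for almost all $t$,
\begin{equation*}
\dot V=2e^\top P\bigl(f(x,u,t)-f(\hat x,u,t)\bigr)-2e^\top PL\bigl(h(x,t)-h(\hat x,t)\bigr)+2e^\top Pw-2e^\top PLv .
\end{equation*}
By (A1) the first term is at most $-2\lambda\|e\|^2$. By Cauchy--Schwarz, (A2) and (A3), the second term is at most
\begin{equation*}
2\|PL\|L_h\|e\|^2\le\lambda\|e\|^2 .
\end{equation*}
Together these leave a margin of $-\lambda\|e\|^2$. This is exactly why the factor $1/2$ appears in (A3).

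\textbf{Step 2: absorb the cross terms by Young's inequality.} For any $\varepsilon>0$,
\begin{align*}
2e^\top Pw&\le \varepsilon\|e\|^2+\varepsilon^{-1}\|P\|^2\|w\|^2,\\
-2e^\top PLv&\le \varepsilon\|e\|^2+\varepsilon^{-1}\|PL\|^2\|v\|^2 .
\end{align*}
Choosing $\varepsilon=\lambda/4$ gives
\begin{equation*}
\dot V\le-\tfrac{\lambda}{2}\|e\|^2+\tfrac{4}{\lambda}\bigl(\|P\|^2\|w\|^2+\|PL\|^2\|v\|^2\bigr).
\end{equation*}
Using $\|e\|^2\ge V/\lambda_{\max}(P)$ then yields \eqref{eq:lyap-diss} with
\begin{equation*}
a=\frac{\lambda}{2\lambda_{\max}(P)},\qquad b=\frac{4}{\lambda}\max\bigl\{\|P\|^2,\ \|PL\|^2/\rho^2\bigr\},
\end{equation*}
where $\rho>0$ is the fixed weighting parameter in \eqref{eq:d-weighted-norm}. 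If $L=0$ or $\|PL\|=0$, the $v$ term simply vanishes and $b$ stays positive through the $\|P\|^2$ term.

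\textbf{Step 3: apply Proposition~\ref{prop:lyap-to-iss} and convert norms.} The proposition gives
\begin{equation*}
\|e(t)\|\le c_1e^{-at/2}\|e(0)\|+c_2\|d\|_{\infty,[0,t]} ,
\end{equation*}
with the weighted norm $\|d\|^2=\|w\|^2+\rho^2\|v\|^2$. Pointwise we have $\|d(\tau)\|\le\|w(\tau)\|+\rho\|v(\tau)\|$, so
\begin{equation*}
\|d\|_{\infty,[0,t]}\le\max\{1,\rho\}\bigl(\|w\|_{\infty,[0,t]}+\|v\|_{\infty,[0,t]}\bigr).
\end{equation*}
Setting $\beta(r,t)=c_1e^{-at/2}r\in\mathcal{KL}$ and $\gamma(s)=c_2\max\{1,\rho\}s\in\mathcal K$ gives the claim.

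\textbf{Expected obstacle.} The main difficulty is Step 1, specifically interpreting (A1) and (A3) consistently. The inequality (A1) controls $e^\top P\Delta f$, and (A3) must leave a strict margin after the injection term is accounted for. With the factor $1/2$ the margin is $\lambda\|e\|^2$, which is positive, so the argument goes through. One also has to take care that all norms in (A2) and (A3) are Euclidean and induced-spectral, so that Cauchy--Schwarz applies directly.
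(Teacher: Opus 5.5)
Your proof is correct and follows the paper's argument: the same margin $-\lambda\|e\|^2$ from (A1)--(A3), the same Young split with parameter $\lambda/4$, the same $a=\lambda/(2\lambda_{\max}(P))$, and then Proposition~\ref{prop:lyap-to-iss}. The differences are only in bookkeeping, both in your favour. The paper fixes $\rho=\|PL\|/\|P\|$, which degenerates when $PL=0$, whereas you keep an arbitrary $\rho>0$ and take the maximum in $b$. You also carry out explicitly the final conversion from the weighted norm to $\|w\|_{\infty,[0,t]}+\|v\|_{\infty,[0,t]}$, which the paper leaves implicit.
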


\begin{proof}
Let $V=e^\top Pe$. Conditions (A1)--(A3) imply that the nominal drift and output-injection terms satisfy
\begin{align}
2e^\top P\bigl(f(x,u,t)-f(\hat x,u,t)\bigr)
&\nonumber\\[-1mm]
-2e^\top PL\bigl(h(x,t)-h(\hat x,t)\bigr)
&\leq-\lambda\|e\|^2.
\end{align}
Applying Young's inequality to $2e^\top Pw-2e^\top PLv$, with both free parameters equal to $\lambda/4$, gives
\begin{align}
\dot V\leq{}&-\frac{\lambda}{2}\|e\|^2
+\frac{4\|P\|^2}{\lambda}\|w\|^2\\
&+\frac{4\|PL\|^2}{\lambda}\|v\|^2.
\end{align}
With $\rho=\|PL\|/\|P\|$ and $\|d\|^2=\|w\|^2+\rho^2\|v\|^2$, this becomes
\begin{align}
\dot V&\leq-aV+b\|d\|^2,\\
a&=\frac{\lambda}{2\lambda_{\max}(P)},\qquad
b=\frac{4\|P\|^2}{\lambda}.
\end{align}
Proposition~\ref{prop:lyap-to-iss} then yields the stated ISS estimate.
\end{proof}

\medskip
Corollary~\ref{cor:incremental-full} is a sufficient bridge from full-norm incremental inequalities to the abstract ISS assumption. Its role is expository; failure of this condition has no implication for structured, local, state-dependent, or nonquadratic observer constructions.

\subsection{Structured LMI condition for nonlinear observer contraction}
\label{sec:structured-contraction}

Quadratic linear matrix inequality (LMI) conditions for Lipschitz nonlinear observers are well established. Rajamani gave a foundational constant-metric construction \cite{rajamani1998observers}. Abbaszadeh and Marquez developed a sampled-data $\mathcal H_\infty$ design \cite{abbaszadeh2008sampled}; Zemouche et al. derived enhanced circle-criterion conditions \cite{zemouche2017circle}; and Beikzadeh and Marquez formulated an input-to-error stable sampled-data observer \cite{beikzadeh2016input}. The condition below is a direct structured specialization for the model used later. Its purpose is to retain the known location $B_\phi$ and argument $H_\phi x$ of the nonlinearity and to provide an affine inner constraint for the finite-horizon tube co-design. Its role is a problem-specific sufficient condition for the subsequent tube co-design.

Consider the nonlinear system
\begin{align}
\dot x(t)={}&Ax(t)+Bu(t)\nonumber\\
&+B_\phi\phi(H_\phi x(t))+w(t),
\label{eq:structured-plant}\\
y(t)={}&Cx(t)+v(t).
\label{eq:structured-output}
\end{align}
where $B_\phi$ locates the nonlinear term and $H_\phi$ selects its argument. The observer is
\begin{align}
\dot{\hat x}(t)={}&A\hat x(t)+Bu(t)+B_\phi\phi(H_\phi\hat x(t))\nonumber\\
&+L\bigl(y(t)-C\hat x(t)\bigr).
\label{eq:structured-observer}
\end{align}
For $e=x-\hat x$, define
\begin{equation}
\Delta\phi=\phi(H_\phi x)-\phi(H_\phi\hat x).
\end{equation}
Assume that the scalar or vector nonlinearity satisfies the incremental bound
\begin{equation}
\|\Delta\phi\|\leq \gamma_\phi\|H_\phi e\|.
\label{eq:structured-increment}
\end{equation}
This condition preserves the direction $B_\phi$ and the argument $H_\phi e$, while avoiding a full-state norm bound for the nonlinear term. The following proposition states the corresponding constant-metric contraction test.

\begin{proposition}[Structured contraction condition]
\label{prop:structured-contraction}
Let $\alpha>0$ be prescribed. Suppose there exist $P\succ0$, $Y$, and $\tau\geq0$ such that
\begin{equation}
\begin{bmatrix}
\operatorname{He}(PA-YC)+\alpha P
+\tau\gamma_\phi^2H_\phi^\top H_\phi
& PB_\phi\\
B_\phi^\top P & -\tau I
\end{bmatrix}
\preceq0.
\label{eq:structured-contraction-lmi}
\end{equation}
Then $L=P^{-1}Y$ satisfies
\begin{equation}
2e^\top P\bigl((A-LC)e+B_\phi\Delta\phi\bigr)
\leq-\alpha e^\top Pe
\label{eq:structured-contraction-inequality}
\end{equation}
for every pair $(e,\Delta\phi)$ satisfying \eqref{eq:structured-increment}.
\end{proposition}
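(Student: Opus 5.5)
The plan is a standard S-procedure argument: evaluate the quadratic form of the LMI at the stacked vector $\xi=(e,\Delta\phi)$ and combine it with the sector-type bound \eqref{eq:structured-increment} written as a nonnegative quadratic form.

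First, substitute $Y=PL$, which is legitimate because $P\succ0$ and $L=P^{-1}Y$. Then $\operatorname{He}(PA-YC)=\operatorname{He}(P(A-LC))$. For any $e$ and $\Delta\phi$ we have
\[
e^\top\operatorname{He}(P(A-LC))e=2e^\top P(A-LC)e .
\]
Next, multiply \eqref{eq:structured-contraction-lmi} on the left by $\xi^\top$ and on the right by $\xi$, where $\xi=\begin{bmatrix}e^\top & \Delta\phi^\top\end{bmatrix}^\top$. The off-diagonal blocks contribute $2e^\top PB_\phi\Delta\phi$. Negative semidefiniteness therefore gives
\[
2e^\top P\bigl((A-LC)e+B_\phi\Delta\phi\bigr)+\alpha e^\top Pe+\tau\bigl(\gamma_\phi^2\|H_\phi e\|^2-\|\Delta\phi\|^2\bigr)\leq 0 .
\]

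Second, invoke the incremental bound. Squaring \eqref{eq:structured-increment}, which is valid since both sides are nonnegative, gives $\gamma_\phi^2\|H_\phi e\|^2-\|\Delta\phi\|^2\geq0$. Because $\tau\geq0$, the bracketed $\tau$-term is nonnegative. Dropping it from the left-hand side preserves the inequality and yields exactly \eqref{eq:structured-contraction-inequality}.

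There is no real obstacle here; the proof is routine. The points needing care are minor. One is the correct expansion of the cross terms: the symmetric placement of $PB_\phi$ and $B_\phi^\top P$ produces the factor $2$. The other is the direction of the S-procedure multiplier, which must enter with a sign that lets the nonnegative sector term be discarded. The argument does not require $x$ or $\hat x$ individually, only the pair $(e,\Delta\phi)$, so the conclusion holds for every admissible pair as stated.
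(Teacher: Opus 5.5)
Your proof is correct and matches the paper's S-procedure argument: you evaluate the LMI's quadratic form at $(e,\Delta\phi)$ with $Y=PL$, then discard the nonnegative $\tau$-weighted sector term. This is the sufficient direction of the S-procedure that the paper invokes, with the expansion it leaves implicit written out.
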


\begin{proof}
Let $z=[e^\top\ \Delta\phi^\top]^\top$. The incremental condition is equivalent to
\begin{equation}
z^\top
\begin{bmatrix}
\gamma_\phi^2H_\phi^\top H_\phi&0\\
0&-I
\end{bmatrix}z\geq0.
\end{equation}
The quadratic form associated with \eqref{eq:structured-contraction-inequality}, augmented by $\tau$ times this constraint, is precisely \eqref{eq:structured-contraction-lmi}. The S-procedure therefore gives the stated implication.
\end{proof}

For an additive disturbance $d_h$ in the error dynamics, \eqref{eq:structured-contraction-inequality} gives
\begin{equation}
\dot V\leq-\alpha V+2e^\top Pd_h.
\end{equation}
For any $\varepsilon\in(0,\alpha)$, Young's inequality yields
\begin{equation}
\dot V\leq-(\alpha-\varepsilon)V
+\frac{\lambda_{\max}(P)}{\varepsilon}\|d_h\|^2.
\label{eq:structured-tube-constants-derivation}
\end{equation}
Hence one admissible pair of tube constants is
\begin{equation}
a_h=\alpha-\varepsilon,
\qquad
b_h=\frac{\lambda_{\max}(P)}{\varepsilon}.
\label{eq:structured-tube-constants}
\end{equation}
The contraction-feasibility study and the tube-width co-design may select different rates: $\alpha$ measures a feasible nominal contraction margin, whereas the decay rate $a$ in the co-design balances convergence against disturbance amplification.

\subsection{Diagonal-metric specialization and positive-system motivation}
\label{sec:positive-subclass}

The following result is a diagonal-metric specialization of the general quadratic construction. Its proof is independent of positivity, while the motivation for recording it separately comes from positive and compartmental models.

\begin{corollary}[Diagonal-metric case]
\label{cor:diagonal-metric}
Consider
\begin{equation}
\dot e=Fe+Gd.
\end{equation}
Suppose that there exist a diagonal matrix $P=\operatorname{diag}(p_1,\ldots,p_n)\succ0$ and a scalar $q>0$ such that
\begin{equation}
\operatorname{He}(PF)\preceq-qI.
\label{eq:positive-diagonal-condition}
\end{equation}
Then the error dynamics satisfy
\begin{equation}
\dot V\leq-aV+b\|d\|_2^2,
\qquad
V=e^\top Pe,
\label{eq:positive-diagonal-dissipation}
\end{equation}
with
\begin{equation}
a=\frac{q}{2\lambda_{\max}(P)},
\qquad
b=\frac{2\|PG\|_2^2}{q}.
\end{equation}
Consequently, every deterministic or finite-horizon probabilistic envelope for $d$ yields the tubes of Corollary~\ref{cor:uniform-quadratic-tube}, and their componentwise half-widths simplify to $\sqrt{\rho_V(t)/p_i}$.
\end{corollary}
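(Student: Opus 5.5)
The plan is to differentiate $V=e^\top Pe$ along $\dot e=Fe+Gd$ and then repeat the Young-inequality step from Corollary~\ref{cor:incremental-full}, this time splitting the decay margin $q$ evenly. For almost all $t$,
\[
\dot V=e^\top\operatorname{He}(PF)e+2e^\top PGd\leq -q\|e\|_2^2+2e^\top PGd,
\]
where the inequality is \eqref{eq:positive-diagonal-condition}. For the cross term, Cauchy--Schwarz and Young with parameter $q/2$ give
\[
2e^\top PGd\leq 2\|e\|_2\|PG\|_2\|d\|_2\leq \tfrac{q}{2}\|e\|_2^2+\tfrac{2}{q}\|PG\|_2^2\|d\|_2^2 .
\]
Combining the two yields $\dot V\leq-\tfrac q2\|e\|_2^2+\tfrac{2\|PG\|_2^2}{q}\|d\|_2^2$.

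Next I would convert $\|e\|_2^2$ back into $V$ using $V\leq\lambda_{\max}(P)\|e\|_2^2$, which gives $-\tfrac q2\|e\|_2^2\leq-\tfrac{q}{2\lambda_{\max}(P)}V$. This is exactly \eqref{eq:positive-diagonal-dissipation} with the stated $a$ and $b$. Because $P$ is diagonal, $\lambda_{\max}(P)=\max_i p_i$. Diagonality plays no other role here, which matches the remark that positivity is not used. The Euclidean norm of $d$ corresponds to taking the weighting in \eqref{eq:d-weighted-norm} as the plain norm, and Proposition~\ref{prop:lyap-to-iss} goes through unchanged with this norm.

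With \eqref{eq:positive-diagonal-dissipation} in hand, the dissipation hypothesis of Proposition~\ref{prop:lyap-to-iss} holds, and the proof of Corollary~\ref{cor:uniform-quadratic-tube} only uses that inequality together with the horizon-level event. So any deterministic bound on $\|d\|_{\infty,[0,T]}$ (an event of probability one) or any probabilistic envelope from Assumption~\ref{ass:uniform-envelope} produces the ellipsoidal, Euclidean and componentwise tubes with $\rho_V$ from \eqref{eq:uniform-rho-v}. For the half-widths, $P^{-1}=\operatorname{diag}(1/p_1,\ldots,1/p_n)$, so $(P^{-1})_{ii}=1/p_i$, and \eqref{eq:uniform-componentwise-tube} becomes $\sqrt{\rho_V(t)/p_i}$.

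I do not expect a real obstacle. The one point to check carefully is that the constants match the statement: the split has to be at $q/2$ to produce $b=2\|PG\|_2^2/q$, and the norm on $d$ must be unweighted Euclidean rather than the weighted norm \eqref{eq:d-weighted-norm}.
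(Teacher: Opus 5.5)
Your proposal is correct and follows the paper's proof essentially step for step. Both use the expansion $\dot V=e^\top\operatorname{He}(PF)e+2e^\top PGd$, the Cauchy--Schwarz/Young split at $q/2$, the conversion $\|e\|_2^2\geq V/\lambda_{\max}(P)$, and then Proposition~\ref{prop:lyap-to-iss} and Corollary~\ref{cor:uniform-quadratic-tube} with $(P^{-1})_{ii}=1/p_i$. Your added remark is also sound: those results use only the scalar dissipation inequality, so the unweighted Euclidean norm on $d$ causes no difficulty, a point the paper's write-up leaves implicit.
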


\begin{proof}
Along the error dynamics,
\begin{equation}
\dot V=e^\top\operatorname{He}(PF)e+2e^\top PGd.
\end{equation}
Condition~\eqref{eq:positive-diagonal-condition} and Young's inequality give
\begin{align}
\dot V
&\leq-q\|e\|_2^2+2\|PG\|_2\|e\|_2\|d\|_2\\
&\leq-\frac{q}{2}\|e\|_2^2+\frac{2\|PG\|_2^2}{q}\|d\|_2^2.
\end{align}
Since $\|e\|_2^2\geq V/\lambda_{\max}(P)$, inequality~\eqref{eq:positive-diagonal-dissipation} follows. The tube statements then follow from Proposition~\ref{prop:lyap-to-iss} and Corollary~\ref{cor:uniform-quadratic-tube}. For diagonal $P$, $(P^{-1})_{ii}=1/p_i$.
\end{proof}

Positive and cooperative systems motivate this specialization because nonnegative states have direct physical meaning and diagonal Lyapunov functions are often natural. These properties are developed systematically in the monographs by Farina and Rinaldi \cite{farina2000positive} and by Kaczorek \cite{kaczorek2002positive}. Positive error dynamics also underpin interval observers. Zhang and Shen survey this design approach \cite{zhang2023intervalsurvey}; Dinh and Tran give a systematic linear construction \cite{dinh2025interval}; and Thabet et al. develop a reduced-order design with unknown inputs and bounded disturbances \cite{thabet2026interval}. In the present construction, the estimation error may be signed and the method propagates one tube around a nominal estimate, while interval observers propagate lower and upper trajectories. In the compartment benchmark the nominal matrix is Metzler and the output matrix is nonnegative; signed additive disturbances preserve state positivity only under additional cone-preserving assumptions. Positivity constraints in MHE are therefore a separate estimator-side choice.

\section{Sampled-data tube and observer co-design}
\label{sec:codesign}

This section connects the preceding ISS and contraction results to the sampled-data observer used in the numerical study. The progression is as follows. First, the held measurement is written as the current output plus a staleness error. Second, process disturbance, sampled measurement noise, and held-output mismatch are collected into a composite disturbance channel. Third, the sampled-data tube theorem is proved. Finally, this tube is turned into a design objective, first for a linear observer and then for a structured nonlinear observer.

\subsection{Sampled-data observer with held measurements}
\label{sec:sampled-data-observer}

The previous sections provide the contraction and tube machinery. We now apply it to the actual sampled-data implementation. The central modelling point is to keep process disturbance, held measurement noise, and intersample output mismatch as separate channels, because the observer gain affects them differently.

The continuous-measurement formulation represents an idealized access pattern. In many implementations the plant evolves continuously while output measurements are available only at discrete sampling instants. We therefore consider a sampling sequence
\begin{align}
    0=t_0<t_1<t_2<\cdots,\qquad
    h_k&:=t_{k+1}-t_k,\\
    0<h_k&\leq \bar h.
    \label{eq:sampling-sequence}
\end{align}
where $\bar h$ is the maximum admissible sampling interval.

The plant evolves continuously according to
\begin{equation}
    \dot x(t)
    =
    f(x(t),u(t),t)+w(t),
    \label{eq:sampled-plant}
\end{equation}
whereas the measured output is available only at $t_k$:
\begin{equation}
    y_k
    =
    h(x(t_k),t_k)+v_k.
    \label{eq:sampled-output}
\end{equation}
Here, $v_k\in\mathbb{R}^m$ denotes measurement noise at the $k$th sampling
instant.

Define the sampling-time map
\begin{equation}
    \sigma(t)
    :=
    t_k,
    \qquad
    t\in[t_k,t_{k+1}),
    \label{eq:sampling-time-map}
\end{equation}
and the zero-order-held signals
\begin{align}
    \bar y(t)&:=y_k,\qquad \bar v(t):=v_k,\\
    t&\in[t_k,t_{k+1}).
    \label{eq:held-signals}
\end{align}
The continuous-time observer driven by the held measurement is
\begin{align}
    \dot{\hat x}(t)
    ={}&f(\hat x(t),u(t),t)\\
    &+L\bigl(\bar y(t)-h(\hat x(t),t)\bigr).
    \label{eq:sampled-data-observer}
\end{align}

\medskip\noindent\textbf{Intersample output mismatch.}
The held measurement contains information about the state at $\sigma(t)$, while the current plant state is evaluated at $t$. Define
\begin{equation}
    \zeta_h(t)
    :=
    h(x(t),t)
    -
    h(x(\sigma(t)),\sigma(t)).
    \label{eq:intersample-output-mismatch}
\end{equation}
Using
\begin{align}
    \bar y(t)
    &=h(x(\sigma(t)),\sigma(t))+\bar v(t)\\
    &=h(x(t),t)-\zeta_h(t)+\bar v(t).
\end{align}
the estimation error $e(t)=x(t)-\hat x(t)$ satisfies
\begin{align}
    \dot e(t)
    ={}&
    f(x(t),u(t),t)-f(\hat x(t),u(t),t)
    \nonumber\\
    &-
    L\left(
        h(x(t),t)-h(\hat x(t),t)
    \right)
    +
    d_h(t),
    \label{eq:sampled-error-dynamics}
\end{align}
where
\begin{equation}
    d_h(t)
    :=
    w(t)-L\bar v(t)+L\zeta_h(t).
    \label{eq:sampled-aggregate-disturbance}
\end{equation}
Thus, sampling introduces an additional disturbance term $L\zeta_h(t)$.
This term vanishes in the continuous-measurement limit $\bar h\to 0$.

\begin{assumption}[Contraction condition]
\label{ass:sampled-nominal-contraction}
There exist a symmetric matrix $P\succ0$ and a constant $\alpha>0$ such that
for all admissible $x,\hat x,u,t$, with $e=x-\hat x$,
\begin{align}
    2e^\top P\bigl(f(x,u,t)-f(\hat x,u,t)\bigr)
    &\nonumber\\
    {}-2e^\top PL\bigl(h(x,t)-h(\hat x,t)\bigr)
    &\leq-\alpha e^\top Pe.
    \label{eq:sampled-contraction-condition}
\end{align}
\end{assumption}

Assumption~\ref{ass:sampled-nominal-contraction} is the nominal contraction requirement for the observer before the sampled-data disturbance channels are added. It is restrictive in the sense that it asks for one constant quadratic metric on the selected operating region. The following sections make this requirement checkable through linear and structured nonlinear LMIs, so it is used here as a transparent intermediate condition.

\begin{assumption}[Output variation]
\label{ass:intersample-output-variation}
There exists a nondecreasing function
$\omega_h:\mathbb{R}_{\geq0}\rightarrow\mathbb{R}_{\geq0}$ satisfying
$\omega_h(0)=0$ such that
\begin{equation}
    \|\zeta_h(t)\|
    \leq
    \omega_h(\bar h)
    \qquad
    \text{for all }t\in[0,T].
    \label{eq:intersample-output-bound}
\end{equation}
\end{assumption}

Assumption~\ref{ass:intersample-output-variation} says that the measured output cannot change arbitrarily fast between two sampling instants. This is mild on compact operating regions for locally Lipschitz output maps and bounded state velocities. It becomes restrictive when the state can leave the prescribed region, or when unmodelled fast output dynamics are present. A directly verifiable sufficient condition is given next.

The following lemma supplies the staleness bound used in the sampled-data tube.

\begin{lemma}[Output-staleness bound]
\label{lem:output-staleness}
Suppose that, on the considered forward-invariant region,
\begin{align}
    \|h(x_1,t)-h(x_2,t)\|
    &\leq
    L_{h,x}\|x_1-x_2\|,
    \label{eq:h-state-lipschitz}\\
    \|h(x,t_1)-h(x,t_2)\|
    &\leq
    L_{h,t}|t_1-t_2|,
    \label{eq:h-time-lipschitz}
\end{align}
and that
\begin{equation}
    \|\dot x(t)\|\leq M_x
    \qquad
    \text{for all }t\in[0,T].
    \label{eq:state-rate-bound}
\end{equation}
Then Assumption~\ref{ass:intersample-output-variation} holds with
\begin{equation}
    \omega_h(\bar h)
    =
    \left(
        L_{h,x}M_x+L_{h,t}
    \right)\bar h.
    \label{eq:explicit-staleness-bound}
\end{equation}
\end{lemma}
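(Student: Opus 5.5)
The plan is a triangle-inequality split of the mismatch $\zeta_h(t)$ into a state-variation part and a time-variation part, each controlled by one of the hypotheses. Fix $t\in[0,T]$ and let $t_k=\sigma(t)$, so that $0\le t-\sigma(t)<h_k\le\bar h$. Insert the intermediate term $h(x(\sigma(t)),t)$ and write
\begin{equation*}
\zeta_h(t)=\bigl[h(x(t),t)-h(x(\sigma(t)),t)\bigr]+\bigl[h(x(\sigma(t)),t)-h(x(\sigma(t)),\sigma(t))\bigr].
\end{equation*}
The first bracket has a common time argument, so \eqref{eq:h-state-lipschitz} applies to it. The second bracket has a common state argument, so \eqref{eq:h-time-lipschitz} applies to it.

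For the first bracket, \eqref{eq:h-state-lipschitz} gives the bound $L_{h,x}\|x(t)-x(\sigma(t))\|$. Because the solution is absolutely continuous (the standing Carath\'eodory well-posedness assumption), we can write $x(t)-x(\sigma(t))=\int_{\sigma(t)}^t\dot x(s)\,ds$. Then \eqref{eq:state-rate-bound} gives $\|x(t)-x(\sigma(t))\|\le M_x(t-\sigma(t))\le M_x\bar h$. For the second bracket, \eqref{eq:h-time-lipschitz} directly gives $L_{h,t}(t-\sigma(t))\le L_{h,t}\bar h$. Adding the two bounds yields $\|\zeta_h(t)\|\le(L_{h,x}M_x+L_{h,t})\bar h$. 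This is \eqref{eq:intersample-output-bound} with $\omega_h$ as in \eqref{eq:explicit-staleness-bound}.

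It remains to check the properties of $\omega_h$ required in Assumption~\ref{ass:intersample-output-variation}. As a function of $\bar h$, it is linear with a nonnegative coefficient. It is therefore nondecreasing, satisfies $\omega_h(0)=0$, and maps $\mathbb R_{\ge0}$ into $\mathbb R_{\ge0}$.

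The only delicate point is admissibility of the Lipschitz estimates. Both $x(t)$ and $x(\sigma(t))$ must lie in the forward-invariant region where \eqref{eq:h-state-lipschitz} holds. This is guaranteed by forward invariance, since the trajectory stays in the region on all of $[0,T]$. In addition, the rate bound \eqref{eq:state-rate-bound} is only available almost everywhere. The integral representation of $x(t)-x(\sigma(t))$ handles this, because it never needs $\dot x$ at a particular instant.
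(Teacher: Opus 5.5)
Your proposal is correct and follows the paper's proof essentially verbatim. Both add and subtract $h(x(t_k),t)$, apply the state-Lipschitz bound together with $\|x(t)-x(t_k)\|\le\int_{t_k}^{t}\|\dot x(s)\|\,ds\le M_x(t-t_k)$, apply the time-Lipschitz bound, and conclude with $t-t_k\le h_k\le\bar h$; your extra checks (that $\omega_h$ is nondecreasing with $\omega_h(0)=0$, and that both evaluation points lie in the forward-invariant region) are points the paper leaves implicit.
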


\begin{proof}
For $t\in[t_k,t_{k+1})$, add and subtract $h(x(t_k),t)$ to obtain
\begin{align}
    \|\zeta_h(t)\|
    \leq{}&
    \|h(x(t),t)-h(x(t_k),t)\|
    \nonumber\\
    &+
    \|h(x(t_k),t)-h(x(t_k),t_k)\|.
\end{align}
The Lipschitz assumptions give
\begin{equation}
    \|\zeta_h(t)\|
    \leq
    L_{h,x}\|x(t)-x(t_k)\|
    +
    L_{h,t}(t-t_k).
\end{equation}
Moreover,
\begin{equation}
    \|x(t)-x(t_k)\|
    \leq
    \int_{t_k}^{t}\|\dot x(s)\|\,ds
    \leq
    M_x(t-t_k).
\end{equation}
Since $t-t_k\leq h_k\leq\bar h$, the result follows.
\end{proof}

\begin{assumption}[Disturbance envelopes]
\label{ass:sampled-probabilistic-envelopes}
Let $T>0$. For risk allocations
$\delta_w,\delta_v\in(0,1)$, assume that deterministic quantities
$\bar w_T(\delta_w)$ and $\bar v_T(\delta_v)$ satisfy
\begin{align}
    \mathbb{P}\left(
        \|w\|_{\infty,[0,T]}
        \leq
        \bar w_T(\delta_w)
    \right)
    &\geq
    1-\delta_w,
    \label{eq:process-envelope}\\
    \mathbb{P}\left(
        \max_{0\leq k\leq N_T}\|v_k\|
        \leq
        \bar v_T(\delta_v)
    \right)
    &\geq
    1-\delta_v,
    \label{eq:sample-noise-envelope}
\end{align}
where
\begin{equation}
    N_T
    :=
    \max\{k:t_k\leq T\}.
\end{equation}
No independence between $w$ and $(v_k)$ is required.
\end{assumption}

Assumption~\ref{ass:sampled-probabilistic-envelopes} assigns a finite-horizon risk budget to the input channels. In words, the process disturbance and all sampled measurement errors over $[0,T]$ are probably bounded by deterministic numbers that depend on the horizon, the sampling grid, and the selected failure probabilities. The statement allows conservative constructions, such as union bounds over all samples, and does not require independent process and measurement disturbances.

Because the noise is held between sampling instants,
\begin{equation}
    \|\bar v\|_{\infty,[0,T]}
    =
    \max_{0\leq k\leq N_T}\|v_k\|.
    \label{eq:held-noise-maximum}
\end{equation}

Combining the nominal contraction condition, the staleness bound, and the finite-horizon disturbance envelopes gives the sampled-data tube theorem.

\begin{theorem}[Sampled-data tube]
\label{thm:sampled-data-uniform-tube}
Suppose that Assumptions~\ref{ass:sampled-nominal-contraction},
\ref{ass:intersample-output-variation}, and
\ref{ass:sampled-probabilistic-envelopes} hold. Let
\begin{equation}
    \delta_w+\delta_v\leq\delta
\end{equation}
and define
\begin{align}
    \bar d_{h,T}(\delta)
    :={}&\bar w_T(\delta_w)\\
    &+\|L\|\bigl(\bar v_T(\delta_v)+\omega_h(\bar h)\bigr).
    \label{eq:sampled-disturbance-envelope}
\end{align}
For any $\varepsilon\in(0,\alpha)$, let
\begin{equation}
    a_h:=\alpha-\varepsilon,
    \qquad
    b_h:=\frac{\lambda_{\max}(P)}{\varepsilon}.
    \label{eq:sampled-lyapunov-constants}
\end{equation}
Then, with
\begin{align}
    \rho_h(\tau;T,\delta)
    :={}&e^{-a_h\tau}V(e(0))\\
    &+\frac{b_h}{a_h}\bigl(1-e^{-a_h\tau}\bigr)
      \bar d_{h,T}^2(\delta).
    \label{eq:sampled-rho}
\end{align}
the following simultaneous guarantee holds:
\begin{equation}
\begin{aligned}
    \mathbb{P}\Bigl(&e(\tau)^\top Pe(\tau)
      \leq\rho_h(\tau;T,\delta)\\
      &\text{for all }\tau\in[0,T]\Bigr)
    \geq1-\delta.
\end{aligned}
    \label{eq:sampled-ellipsoidal-tube}
\end{equation}
Consequently,
\begin{equation}
\begin{aligned}
    \mathbb{P}\Bigl(&\|e(\tau)\|
      \leq\sqrt{\rho_h(\tau;T,\delta)/\lambda_{\min}(P)}\\
      &\text{for all }\tau\in[0,T]\Bigr)
    \geq1-\delta.
\end{aligned}
    \label{eq:sampled-euclidean-tube}
\end{equation}
and the joint componentwise statement
\begin{align}
    r_{i,h}(\tau)&:=
      \sqrt{(P^{-1})_{ii}\rho_h(\tau;T,\delta)},\\
    \mathbb{P}\Bigl(&|e_i(\tau)|\leq r_{i,h}(\tau),\nonumber\\
      &i=1,\ldots,n,\ \tau\in[0,T]\Bigr)\geq1-\delta.
    \label{eq:sampled-componentwise-tube}
\end{align}
\end{theorem}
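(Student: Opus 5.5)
The plan is to reduce the statement to Corollary~\ref{cor:uniform-quadratic-tube}. To do so we need a dissipation inequality of the form \eqref{eq:lyap-diss} with constants $(a_h,b_h)$ for the aggregated input $d_h$ in \eqref{eq:sampled-aggregate-disturbance}. We also need a horizon-level envelope event for $d_h$ whose probability is at least $1-\delta$.

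\textbf{Step 1 (dissipation).} With $V=e^\top Pe$, differentiate along \eqref{eq:sampled-error-dynamics}. The drift-plus-injection part is bounded by $-\alpha V$ using Assumption~\ref{ass:sampled-nominal-contraction}. This leaves $\dot V\leq-\alpha V+2e^\top Pd_h$ for almost all $t$, which is exactly the situation preceding \eqref{eq:structured-tube-constants-derivation}. Applying Young's inequality in the $P$-weighted form gives $2e^\top Pd_h\leq\varepsilon e^\top Pe+\varepsilon^{-1}d_h^\top Pd_h$, and hence
\begin{equation*}
\dot V\leq-(\alpha-\varepsilon)V+\frac{\lambda_{\max}(P)}{\varepsilon}\|d_h\|^2 ,
\end{equation*}
that is, \eqref{eq:lyap-diss} holds with $a=a_h$ and $b=b_h$ from \eqref{eq:sampled-lyapunov-constants}. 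Here the input is the unweighted $\|d_h\|$, not the $(w,v)$-weighted norm \eqref{eq:d-weighted-norm}. I will note that the integration argument in Corollary~\ref{cor:uniform-quadratic-tube} uses only the scalar signal $\|d(s)\|^2$, so it applies verbatim with $d$ replaced by $d_h$.

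\textbf{Step 2 (envelope event).} Let $\mathcal E_w$ and $\mathcal E_v$ be the events in \eqref{eq:process-envelope} and \eqref{eq:sample-noise-envelope}. On $\mathcal E_w\cap\mathcal E_v$, the triangle inequality, \eqref{eq:held-noise-maximum}, and Assumption~\ref{ass:intersample-output-variation} give, for almost every $t\in[0,T]$,
\begin{equation*}
\|d_h(t)\|\leq\|w(t)\|+\|L\|\,\|\bar v(t)\|+\|L\|\,\|\zeta_h(t)\|\leq\bar d_{h,T}(\delta).
\end{equation*}
The staleness bound is deterministic and therefore consumes no probability. By the union bound,
\begin{equation*}
\mathbb P(\mathcal E_w\cap\mathcal E_v)\geq1-\delta_w-\delta_v\geq1-\delta ,
\end{equation*}
so Assumption~\ref{ass:uniform-envelope} holds for $d_h$ with envelope $\bar d_{h,T}(\delta)$. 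No independence is needed.

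\textbf{Step 3 (conclusion).} On this event, integrating the inequality from Step~1 as in Corollary~\ref{cor:uniform-quadratic-tube} gives $V(\tau)\leq\rho_h(\tau;T,\delta)$ simultaneously for all $\tau\in[0,T]$. This is \eqref{eq:sampled-ellipsoidal-tube}. The Euclidean bound \eqref{eq:sampled-euclidean-tube} follows from $e^\top Pe\geq\lambda_{\min}(P)\|e\|^2$. The componentwise bound \eqref{eq:sampled-componentwise-tube} follows from $\max\{e_i^2: e^\top Pe\leq\rho\}=\rho(P^{-1})_{ii}$. All three bounds hold on the same event, so no further union bound is required.

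\textbf{Main obstacle.} The delicate point is Assumption~\ref{ass:intersample-output-variation}. It is stated deterministically for all $t$, but in practice (Lemma~\ref{lem:output-staleness}) it depends on the state staying in a forward-invariant region. I will take it exactly as stated, as a bound valid for every realization, so that it enters $\bar d_{h,T}$ without any probabilistic cost. I will flag that if it held only on an event, that event's failure probability would have to be added to the risk budget.
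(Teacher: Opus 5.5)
Your proposal is correct and follows the paper's proof essentially step for step. Both arguments use the same union-bound event for the process and sampled-noise envelopes, bound $\|d_h\|$ via the staleness assumption and the held-noise maximum, obtain $\dot V\leq -a_hV+b_h\|d_h\|^2$ from the $P$-metric Young inequality, integrate on that single event, and finish with the eigenvalue and ellipsoid-projection bounds. Framing this as a reduction to the quadratic-tube corollary, with the remark that only the scalar signal $\|d_h\|^2$ enters the integration, is just a repackaging of the direct integration the paper carries out.
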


\begin{proof}
Consider the event
\begin{equation}
\begin{aligned}
    \mathcal{E}_{T,\delta}:=\Bigl\{&
      \|w\|_{\infty,[0,T]}\leq\bar w_T(\delta_w),\\
      &\max_{0\leq k\leq N_T}\|v_k\|
      \leq\bar v_T(\delta_v)\Bigr\}.
\end{aligned}
\end{equation}
The union bound gives
\begin{equation}
    \mathbb{P}(\mathcal{E}_{T,\delta})
    \geq
    1-\delta_w-\delta_v
    \geq
    1-\delta.
\end{equation}
On this event, equations~\eqref{eq:sampled-aggregate-disturbance},
\eqref{eq:intersample-output-bound}, and
\eqref{eq:held-noise-maximum} imply
\begin{equation}
    \|d_h(t)\|
    \leq
    \bar d_{h,T}(\delta)
    \qquad
    \text{for all }t\in[0,T].
\end{equation}

For $V(e)=e^\top Pe$, Assumption~\ref{ass:sampled-nominal-contraction}
gives
\begin{equation}
    \dot V
    \leq
    -\alpha V
    +
    2e^\top P d_h.
\end{equation}
Using Young's inequality in the $P$ metric,
\begin{align}
    2e^\top P d_h
    &\leq
    \varepsilon e^\top Pe
    +
    \frac{1}{\varepsilon}d_h^\top P d_h\\
    &\leq
    \varepsilon V
    +
    \frac{\lambda_{\max}(P)}{\varepsilon}
    \|d_h\|^2.
\end{align}
Therefore,
\begin{equation}
    \dot V
    \leq
    -a_hV+b_h\|d_h\|^2.
\end{equation}
Integration of this scalar differential inequality yields
\begin{equation}
    V(e(\tau))
    \leq
    e^{-a_h\tau}V(e(0))
    +
    \frac{b_h}{a_h}
    \left(
        1-e^{-a_h\tau}
    \right)
    \bar d_{h,T}^2(\delta)
\end{equation}
simultaneously for all $\tau\in[0,T]$ on
$\mathcal{E}_{T,\delta}$. The Euclidean and componentwise bounds follow from
\begin{equation}
    e^\top Pe
    \geq
    \lambda_{\min}(P)\|e\|^2
\end{equation}
and
\begin{equation}
    \max_{e^\top Pe\leq\rho} e_i^2
    =
    \rho(P^{-1})_{ii},
\end{equation}
respectively.
\end{proof}

\begin{remark}[Continuous-measurement limit]
If the sampling interval tends to zero and the output trajectory is uniformly
continuous, then $\omega_h(\bar h)\to0$. The sampled-data tube therefore
converges to its continuous-measurement counterpart, subject to the selected
measurement-noise envelope.
\end{remark}

\begin{remark}[Sampling-frequency trade-off]
Reducing $\bar h$ decreases the deterministic staleness contribution
$\omega_h(\bar h)$. For unbounded measurement-noise distributions, however,
the high-probability envelope for
$\max_{0\leq k\leq N_T}\|v_k\|$ generally increases with the number of samples.
The resulting tube therefore exposes a trade-off between intersample model
error and finite-horizon extreme measurement noise.
\end{remark}

\subsection{Linear sampled-data specialization and tube-oriented design}
\label{sec:linear-sampled-specialization}

The linear case is developed in two steps. The first step specializes the sampled-data error equation and gives the composite disturbance envelope. The second step turns this envelope into the LMI co-design problem. The details are placed in one subsection because the later numerical values are obtained by applying this exact design problem to the compartment benchmark.

Consider
\begin{equation}
    \dot x(t)=Ax(t)+w(t),
    \qquad
    y_k=Cx(t_k)+v_k,
    \label{eq:linear-sampled-plant}
\end{equation}
and the held-measurement observer
\begin{equation}
    \dot{\hat x}(t)
    =
    A\hat x(t)
    +
    L\left(
        \bar y(t)-C\hat x(t)
    \right).
    \label{eq:linear-held-observer}
\end{equation}
Let
\begin{equation}
    F:=A-LC.
\end{equation}
The estimation error then satisfies
\begin{align}
    \dot e(t)={}&Fe(t)+w(t)-L\bar v(t)\\
    &+LC\bigl(x(t)-x(\sigma(t))\bigr).
    \label{eq:linear-sampled-error}
\end{align}

Suppose that $F$ is Hurwitz and let $P\succ0$ solve
\begin{equation}
    F^\top P+PF=-Q,
    \qquad Q\succ0.
    \label{eq:linear-sampled-lyapunov}
\end{equation}
Then the nominal error dynamics satisfy
\begin{equation}
    2e^\top PFe
    =
    -e^\top Qe
    \leq
    -
    \frac{\lambda_{\min}(Q)}
         {\lambda_{\max}(P)}
    e^\top Pe.
\end{equation}
Thus Assumption~\ref{ass:sampled-nominal-contraction} holds with
\begin{equation}
    \alpha
    =
    \frac{\lambda_{\min}(Q)}
         {\lambda_{\max}(P)}.
    \label{eq:linear-alpha}
\end{equation}

If
\begin{equation}
    \|\dot x(t)\|\leq M_x
    \qquad
    \text{on }[0,T],
\end{equation}
then
\begin{equation}
    \|x(t)-x(\sigma(t))\|
    \leq
    M_x\bar h,
\end{equation}
and hence
\begin{equation}
    \|d_h(t)\|
    \leq
    \|w(t)\|
    +
    \|L\|\|\bar v(t)\|
    +
    \|LC\|M_x\bar h.
    \label{eq:linear-sampled-disturbance-bound}
\end{equation}
A finite-horizon high-probability envelope is therefore
\begin{align}
    \bar d_{h,T}(\delta)
    ={}&\bar w_T(\delta_w)+\|L\|\bar v_T(\delta_v)\\
    &+\|LC\|M_x\bar h,\qquad
      \delta_w+\delta_v\leq\delta.
    \label{eq:linear-sampled-envelope}
\end{align}
Substitution of \eqref{eq:linear-sampled-envelope} into
Theorem~\ref{thm:sampled-data-uniform-tube} gives the required sampled-data
tube for the compartment model.

\subsubsection{Tube-oriented linear observer design}
\label{sec:tube-oriented-observer-design}

The ISS tube can be used both as an a posteriori robustness diagnostic and as an observer-design criterion. We formulate a finite-horizon design problem that jointly selects the observer gain and the quadratic Lyapunov metric.

Consider the sampled-data error dynamics
\begin{equation}
    \dot e(t)
    =
    (A-LC)e(t)
    +
    w(t)
    -
    L\bar v(t)
    +
    LCq_h(t),
    \label{eq:design-error-dynamics}
\end{equation}
where
\begin{equation}
    q_h(t)
    :=
    x(t)-x(\sigma(t))
\end{equation}
is the intersample state mismatch.

Let $T>0$ and $\delta\in(0,1)$ be the design horizon and the required risk
level. Suppose that the following finite-horizon envelopes are available:
\begin{align}
    \|w\|_{\infty,[0,T]}
    &\leq
    \bar w_T,
    \label{eq:design-w-envelope}\\
    \|\bar v\|_{\infty,[0,T]}
    &\leq
    \bar v_T(\delta),
    \label{eq:design-v-envelope}\\
    \|q_h\|_{\infty,[0,T]}
    &\leq
    \bar q_h,
    \label{eq:design-q-envelope}
\end{align}
on an event whose probability is at least $1-\delta$. A typical choice for
the intersample envelope is
\begin{equation}
    \bar q_h=M_x\bar h,
    \label{eq:design-q-bound}
\end{equation}
where $\bar h$ is the maximum sampling interval and
$\|\dot x(t)\|\leq M_x$ on the considered region.

Define normalized disturbance signals
\begin{align}
    w(t)&=\bar w_T z_w(t),\\
    \bar v(t)&=\bar v_T(\delta)z_v(t),\\
    q_h(t)&=\bar q_h z_h(t).
\end{align}
where
\begin{equation}
    \|z_w(t)\|,\ \|z_v(t)\|,\ \|z_h(t)\|\leq1.
\end{equation}
Let
\begin{equation}
    z(t)
    :=
    \frac{1}{\sqrt{3}}
    \begin{bmatrix}
        z_w(t)\\
        z_v(t)\\
        z_h(t)
    \end{bmatrix}.
\end{equation}
Then $\|z(t)\|\leq1$ and the error dynamics can be written as
\begin{equation}
    \dot e(t)
    =
    (A-LC)e(t)+G(L)z(t),
    \label{eq:normalized-design-dynamics}
\end{equation}
where
\begin{equation}
    G(L)
    =
    \sqrt{3}
    \begin{bmatrix}
        \bar w_T I
        &
        -\bar v_T(\delta)L
        &
        \bar q_h LC
    \end{bmatrix}.
    \label{eq:design-input-matrix}
\end{equation}

\subsubsection{Initial-error set}

The observer must be designed relative to a prescribed initial-error set.
Let
\begin{equation}
    \mathcal E_0
    :=
    \left\{
        e_0:
        e_0^\top W_0e_0\leq1
    \right\},
    \qquad
    W_0\succ0.
    \label{eq:initial-error-set}
\end{equation}
For example, the Euclidean uncertainty set
$\|e_0\|\leq r_0$ is obtained using
\begin{equation}
    W_0=r_0^{-2}I.
\end{equation}

Let $V(e)=e^\top Pe$, where $P\succ0$. Introduce a scalar $\nu\geq0$
such that
\begin{equation}
    P\preceq \nu W_0.
    \label{eq:initial-energy-bound}
\end{equation}
Then
\begin{equation}
    V(e_0)\leq\nu
    \qquad
    \text{for every }e_0\in\mathcal E_0.
\end{equation}

\subsubsection{Dissipation condition}

Introduce the change of variables
\begin{equation}
    Y:=PL,
    \qquad
    L=P^{-1}Y.
    \label{eq:observer-variable-substitution}
\end{equation}
For a prescribed decay rate $a>0$, define
\begin{align}
    \Phi(P,Y;a)
    &:=A^\top P+PA-C^\top Y^\top-YC+aP,
    \label{eq:design-Phi}\\
    \mathcal G(P,Y)
    &:=\sqrt{3}
    \begin{bmatrix}
        \bar w_TP & -\bar v_T(\delta)Y & \bar q_hYC
    \end{bmatrix}.
    \label{eq:design-G-PY}
\end{align}

If there exists $b\geq0$ such that
\begin{equation}
\begin{bmatrix}
    \Phi(P,Y;a) & \mathcal G(P,Y)\\
    \mathcal G(P,Y)^\top & -bI
\end{bmatrix}
\preceq0,
\label{eq:tube-design-lmi}
\end{equation}
then
\begin{equation}
    \dot V(t)
    \leq
    -aV(t)+b\|z(t)\|^2
    \leq
    -aV(t)+b.
    \label{eq:design-dissipation}
\end{equation}
Consequently,
\begin{align}
    V(e(t))&\leq \rho(t),\\
    \rho(t)&:=e^{-at}\nu
    +\frac{b}{a}\bigl(1-e^{-at}\bigr),\\
    &\hspace{10mm}t\in[0,T].
    \label{eq:design-rho}
\end{align}

The maximum Lyapunov level over the design horizon is
\begin{align}
    \rho_T
    :={}&\max_{t\in[0,T]}\rho(t)\\
    ={}&\max\left\{\nu,\,
      e^{-aT}\nu+\frac{b}{a}\bigl(1-e^{-aT}\bigr)\right\}.
    \label{eq:design-rho-T}
\end{align}

The corresponding componentwise tube radii satisfy
\begin{equation}
    |e_i(t)|
    \leq
    \sqrt{
        (P^{-1})_{ii}\rho(t)
    },
    \qquad
    i=1,\ldots,n.
    \label{eq:design-component-radius}
\end{equation}

\subsubsection{Normalized tube-design problem}

The product
\begin{equation}
    (P^{-1})_{ii}\rho_T
\end{equation}
is invariant under the common scaling
\begin{equation}
    (P,Y,b,\nu)
    \mapsto
    c(P,Y,b,\nu),
    \qquad c>0.
\end{equation}
The design problem can therefore be normalized by requiring
\begin{equation}
    \rho_T\leq1.
    \label{eq:rho-normalization}
\end{equation}
Using \eqref{eq:design-rho-T}, this condition is equivalent to
\begin{align}
    \nu
    &\leq1,
    \label{eq:rho-normalization-1}\\
    e^{-aT}\nu
    +
    \frac{1-e^{-aT}}{a}b
    &\leq1.
    \label{eq:rho-normalization-2}
\end{align}

Let $s_i>0$ denote the scale associated with the $i$th state component.
Introduce a scalar $\eta\geq0$ and impose
\begin{equation}
\begin{bmatrix}
    P & e_i\\
    e_i^\top & \eta s_i^2
\end{bmatrix}
\succeq0,
\qquad
i=1,\ldots,n,
\label{eq:component-radius-lmi}
\end{equation}
where $e_i$ is the $i$th canonical basis vector. By the Schur complement,
\begin{equation}
    (P^{-1})_{ii}
    \leq
    \eta s_i^2.
\end{equation}
Under the normalization $\rho_T\leq1$, this gives
\begin{equation}
    \sup_{t\in[0,T]}
    \frac{|e_i(t)|}{s_i}
    \leq
    \sqrt{\eta}.
\end{equation}

For a fixed decay rate $a>0$, the tube-oriented observer-design problem is
therefore
\begin{equation}
\begin{aligned}
    \min_{P,Y,b,\nu,\eta}\quad&\eta\\
    \text{s.t.}\quad
    &P\succ0,\quad b,\nu,\eta\geq0,\\
    &P\preceq\nu W_0,\quad \nu\leq1,\\
    &e^{-aT}\nu+\frac{1-e^{-aT}}{a}b\leq1,\\
    &\begin{bmatrix}
      \Phi(P,Y;a)&\mathcal G(P,Y)\\
      \mathcal G(P,Y)^\top&-bI
      \end{bmatrix}\preceq0,\\
    &\begin{bmatrix}P&e_i\\e_i^\top&\eta s_i^2\end{bmatrix}
      \succeq0,\quad i=1,\ldots,n.
\end{aligned}
\label{eq:tube-oriented-design-problem}
\end{equation}

For fixed $a$, all constraints in
\eqref{eq:tube-oriented-design-problem} are linear matrix inequalities in
the decision variables.

The observer gain corresponding to a feasible solution is recovered as
\begin{equation}
    L=P^{-1}Y.
    \label{eq:gain-recovery}
\end{equation}

\begin{algorithm}[t]
\caption{Finite-horizon high-probability tube-oriented observer design}
\label{alg:tube-oriented-observer-design}
\begin{algorithmic}[1]
\small
\Require $A,C,T,\delta,\bar h,\bar w_T,\bar v_T(\delta),M_x,W_0$, scales $s_i$, and candidate rates $\mathcal A$.
\State Set $\bar q_h\gets M_x\bar h$ and initialize the best objective value $\eta^\star\gets+\infty$.
\For{each $a\in\mathcal A$}
    \State Solve the fixed-$a$ LMI problem~\eqref{eq:tube-oriented-design-problem}.
    \If{the problem is feasible}
        \State Recover $L_a\gets P_a^{-1}Y_a$ and compute $\rho_a(t)$ from~\eqref{eq:design-rho}.
        \State Compute $r_{i,a}(t)\gets\sqrt{(P_a^{-1})_{ii}\rho_a(t)}$, $i=1,\ldots,n$.
        \If{$\eta_a<\eta^\star$}
            \State Set $a^\star\gets a$ and $(P^\star,Y^\star,L^\star)\gets(P_a,Y_a,L_a)$.
        \State Set $(b^\star,\nu^\star,\eta^\star)\gets(b_a,\nu_a,\eta_a)$.
        \EndIf
    \EndIf
\EndFor
\State Verify the selected solution in the original error dynamics and return $L^\star$, $P^\star$, and the componentwise tubes $r_i^\star(t)$.
\end{algorithmic}
\end{algorithm}

\medskip\noindent\textbf{Design verification.}\ Every accepted solution is reconstructed in the original variables $L=P^{-1}Y$. Positive definiteness, gain recovery, the maximum eigenvalue of each reconstructed LMI, and the componentwise radii are checked independently of the solver representation. The numerical values obtained from this design problem are reported later in the linear benchmark, after the model, disturbance envelopes, and component scales have been specified.

\subsection{Structured nonlinear sampled-data tube co-design}
\label{sec:structured-tube-design}

The nonlinear design is stated more compactly because it reuses the same design logic: specify channel envelopes, impose a quadratic dissipation inequality, and minimize the normalized componentwise tube radius. The additional element is the structured contraction condition. The known pair $(B_\phi,H_\phi)$ is retained in the LMI, which avoids a full-state Lipschitz overbound.

Consider \eqref{eq:structured-plant} with sampled measurements $y_k=Cx(t_k)+v_k$ and a zero-order-held output. Define the measured-output staleness
\begin{equation}
\zeta_h(t)=Cx(t)-Cx(\sigma(t)).
\end{equation}
The nonlinear sampled-data error dynamics are
\begin{equation}
\dot e=(A-LC)e+B_\phi\Delta\phi+w-L\bar v+L\zeta_h.
\label{eq:structured-sampled-error}
\end{equation}
Assume finite-horizon bounds
\begin{align}
\|w\|_{\infty,[0,T]}&\leq\bar w_T,\\
\|\bar v\|_{\infty,[0,T]}&\leq\bar v_T,\\
\|\zeta_h\|_{\infty,[0,T]}&\leq\bar\zeta_h.
\label{eq:structured-envelopes}
\end{align}
on the event under consideration. In the deterministic R1 study this event is the full bounded-disturbance set. In a high-probability application, the same construction applies on the corresponding horizon-level envelope event. These bounds are regional quantities: they are appropriate only on the operating set for which the disturbance and output-rate envelopes have been specified. The main restriction of the nonlinear design is therefore the quality of the selected operating region and its output-rate bound.

Use $Y=PL$ and introduce nonnegative channel coefficients $\mu_w$, $\mu_v$, and $\mu_h$. For a prescribed decay rate $a>0$, define the block matrix
\begin{equation}
\mathcal M_a=
\begin{bmatrix}
\Xi_a & PB_\phi & P & -Y & Y\\
B_\phi^\top P & -\tau I & 0&0&0\\
P&0&-\mu_w I&0&0\\
-Y^\top&0&0&-\mu_v I&0\\
Y^\top&0&0&0&-\mu_h I
\end{bmatrix},
\label{eq:structured-codesign-block}
\end{equation}
where
\begin{equation}
\Xi_a=\operatorname{He}(PA-YC)+aP
+\tau\gamma_\phi^2H_\phi^\top H_\phi.
\end{equation}
If $\mathcal M_a\preceq0$, then the S-procedure and the Schur complement imply
\begin{align}
\dot V\leq{}&-aV+\mu_w\|w\|^2+\mu_v\|\bar v\|^2\\
&+\mu_h\|\zeta_h\|^2.
\label{eq:structured-channel-dissipation}
\end{align}
Define
\begin{equation}
d_T^2=\mu_w\bar w_T^2+\mu_v\bar v_T^2+\mu_h\bar\zeta_h^2.
\label{eq:structured-dT}
\end{equation}
For the initial ellipsoid $\mathcal E_0=\{e_0:e_0^\top W_0e_0\leq1\}$, impose $P\preceq\nu W_0$. The resulting Lyapunov level satisfies
\begin{equation}
\rho(t)=e^{-at}\nu+\frac{1-e^{-at}}{a}d_T^2.
\label{eq:structured-design-rho}
\end{equation}

As in the linear case, a common positive scaling of $(P,Y,\tau,\mu_w,\mu_v,\mu_h,\nu)$ leaves the physical componentwise radii unchanged. For reference scales $s_i>0$, the fixed-$a$ co-design problem is
\begin{equation}
\begin{aligned}
\min_{P,Y,\tau,\mu_w,\mu_v,\mu_h,\nu,\eta}\quad&\eta\\
\text{s.t.}\quad
&P\succ0,\quad \tau,\mu_w,\mu_v,\mu_h,\nu,\eta\geq0,\\
&P\preceq\nu W_0,\quad \nu\leq1,\\
&e^{-aT}\nu+\frac{1-e^{-aT}}{a}d_T^2\leq1,\\
&\mathcal M_a\preceq0,\\
&\begin{bmatrix}P&e_i\\e_i^\top&\eta s_i^2\end{bmatrix}
\succeq0,\quad i=1,\ldots,n.
\end{aligned}
\label{eq:structured-tube-design-problem}
\end{equation}
For fixed $a$, all constraints are LMIs. The gain is recovered as $L=P^{-1}Y$, and $a$ is selected by an outer search over a specified grid. The objective bounds the worst normalized componentwise half-width by $\sqrt\eta$ under the chosen normalization.

The channel-specific formulation makes the design trade-off explicit. A larger gain can improve nominal contraction while increasing the measurement and staleness channels. The optimized design therefore balances contraction rate and gain size through the tube objective. Every reported solution is independently reconstructed using the recovered gain and the original block LMI.

\section{Numerical studies}
\label{sec:numerical}
\label{sec:numerical-studies}

The numerical study is primarily a study of tubes: their construction, their dependence on the disturbance envelope, and their use as an observer-design objective. The point estimators are included to provide familiar reference trajectories and pointwise uncertainty bands. The linear compartment model isolates the finite-horizon probability interpretation and the estimator comparison. The flexible-joint model tests whether the same tube-oriented design logic can be used with a structured nonlinear observer condition.

The numerical study uses two disturbance regimes. Regime R1 is the bounded-noise regime: process and measurement disturbances are componentwise bounded, and the resulting tube is a deterministic finite-horizon tube on the specified bounded-disturbance set. Regime R2 is the contaminated-measurement regime: the process disturbance remains bounded, while sampled measurement noise follows a Gaussian mixture with rare high-amplitude outliers. In R2 the tube is computed from a conservative high-probability measurement envelope over the full horizon. Thus R1 tests the deterministic bounded-disturbance construction, while R2 tests the finite-horizon probabilistic envelope under outlier-contaminated measurements. The nonlinear flexible-joint study uses the bounded R1 setting only, because its purpose is to test the structured nonlinear tube co-design rather than the contaminated-noise envelope.

For implementation, the procedure is deliberately sequential. First, select the observer structure, finite horizon, sampling period, state scales, and admissible initial-error set. Second, specify process, measurement, and held-output-mismatch envelopes, either deterministically or through a horizon-level probability event. Third, solve the gain-metric LMI for a selected decay-rate grid and recover the observer gain. Fourth, reconstruct the LMI residuals independently and compute the ellipsoidal or componentwise tube. Fifth, use the resulting time-varying tube as a monitoring object for alarms, constraint checks, or diagnostic escalation between sampled measurements.

\subsection{Common protocol, estimators, and metrics}
\label{sec:methodology}

The computational study has two roles. First, it verifies that the proposed LMIs produce feasible, independently reconstructed observer designs. Second, it compares the semantics and empirical behavior of simultaneous ISS tubes with pointwise uncertainty bands. All estimators in a given experiment use the same plant trajectory, disturbance realization, and sampled measurements.

\subsubsection{Estimators and uncertainty descriptions}

The linear study uses three estimator-side references: the continuous-discrete Kalman filter, Gaussian MHE, and robust Huber MHE. The Kalman filter is used as the classical linear Gaussian reference \cite{kalman1960new}. Gaussian MHE follows the constrained moving-horizon formulation of Rao et al. \cite{rao2003constrained}. Huber MHE keeps the same arrival and process terms and replaces the quadratic measurement penalty by a Huber loss; it is included only in the contaminated R2 regime as a robust-loss comparator, in line with MHE studies under bounded uncertainty and outliers \cite{liu2013robust}, \cite{liu2024multipleoutliers}.

The MHE implementation uses a $40$-step, $2$~s maximum window with exact discretization at the measurement period. Its decision vector contains the state sequence, the arrival covariance is $P_0=10^{-2}I_3$, the process covariance per sampling interval is $Q_d=5\times10^{-6}I_3$, and the measurement covariance is $R=2.5\times10^{-5}I_2$. Nonnegative-state constraints are imposed at every point in the window. During startup, the available horizon grows from the first measurement until the maximum window length is reached. Once the window slides, the arrival mean is the previous optimized state corresponding to the new window start, and the shifted previous solution provides the warm start. The terminal optimized state is returned as the current estimate. The Huber implementation uses threshold $1.5$ on whitened measurement residuals and four iteratively reweighted least-squares steps. Both MHE variants use OSQP with absolute and relative tolerances $10^{-5}$ and a limit of $2000$ iterations.

For both MHE variants, the local uncertainty band is obtained from the inverse Hessian at the MAP solution. With the implemented full-objective convention $J=\sum r^\top Wr$, the covariance approximation is $\Sigma=2H^{-1}$. This is a local Laplace approximation, with a different interpretation from a sampled posterior. When positivity constraints are active, the inverse Hessian omits the truncation induced by the active set, so active-constraint frequencies accompany the MHE results \cite{rao2003constrained}. The ISS observer is a continuous-time Luenberger-type observer driven by the sampled and held innovation; its reported uncertainty object is the finite-horizon tube derived in the preceding sections.

The nonlinear study compares the proposed nonlinear sampled-data ISS observer with a continuous-discrete EKF. The EKF uses RK4 propagation of the nonlinear mean, local Jacobian covariance propagation, and a Joseph-form covariance update at measurement instants. It is used as a standard local Gaussian reference for nonlinear estimation \cite{haseltine2005critical}.

\subsubsection{Metrics}

Pointwise inclusion is the empirical inclusion frequency over recorded state-time pairs. Componentwise simultaneous inclusion is the fraction of trajectories for which one state component remains in its band for every recorded time. Joint trajectory inclusion requires every state to remain inside its corresponding band for the complete horizon; for ISS tubes, the same event is termed simultaneous containment. The aggregate normalized RMSE is the root mean square of the statewise normalized RMSE values. The mean width is averaged over states, trajectories, and recorded times. Wilson score intervals are reported only for trajectory-level binary frequencies because state-time observations within one trajectory are dependent.

Low joint inclusion of pointwise $95\%$ bands over a long grid is compatible with well-calibrated marginal pointwise events. A product of many pointwise events can have low probability even when each marginal event is well calibrated. This distinction is central to the comparison.

\subsubsection{Numerical verification of LMI solutions}

Solver status alone is insufficient evidence of feasibility. Each accepted design is reconstructed from $L=P^{-1}Y$, after which symmetry, positive definiteness, gain recovery, normalization, componentwise inequalities, and maximum LMI eigenvalues are evaluated independently. The paper reports scalar margins and scientifically interpretable objectives. The public Zenodo replication package contains the full matrices, residual arrays, configurations, and figure and table generators used to reproduce these checks \cite{baranowski2026replication}.

\subsection{Linear compartment benchmark}
\label{sec:linear-study}

\subsubsection{Model, measurements, and disturbance regimes}

The linear benchmark is a nominally positive compartment system
\begin{equation}
\begin{aligned}
\dot x(t)&=Ax(t)+w(t),\\
A&=\begin{bmatrix}
-0.65&0.15&0\\
0.35&-0.60&0.10\\
0&0.25&-0.35
\end{bmatrix}.
\end{aligned}
\end{equation}
with aggregated measurements
\begin{equation}
y_k=Cx(t_k)+v_k,
\qquad
C=\begin{bmatrix}1&1&1\\1&0&0\end{bmatrix}.
\end{equation}
The first output is total mass and the second is the first compartment. The physically meaningful state space is the positive cone $\mathbb R_{\geq0}^3$, a nonlinear constraint set. This matters for estimation because a linear observer can estimate a positive plant while producing negative components during transients or under noise. The benchmark is therefore nominally positive in the plant model, while positivity of the estimates is enforced only for the constrained MHE comparator. The ISS tube itself bounds signed estimation error around the ISS observer trajectory and does not impose positivity of the estimate.

The two regimes introduced at the beginning of this section are instantiated as follows. In R1, the process and measurement disturbances are componentwise bounded by $\bar w=0.02$ and $\bar v=0.01$. In R2, the process disturbance remains bounded and the sampled measurement noise follows a contaminated Gaussian mixture: standard deviation $\sigma=0.005$, contamination probability $p=0.03$, and outlier scale multiplier $\kappa=15$. The reported study uses $T=20$~s, $\Delta t_y=0.05$~s, $\delta=0.05$, and $N=1000$ trajectories per regime. Initial states are sampled independently from $[0.5,1.5]^3$, and all estimators start from the true initial state. Process disturbances are sampled independently and uniformly from $[-0.02,0.02]^3$ and held for $0.2$~s. R1 measurement errors are sampled independently and uniformly from $[-0.01,0.01]^2$ at each measurement time; R2 measurement errors follow the mixture specified above. All estimators use the same realization within a trajectory. The perfect initialization is a simulation protocol choice; the tube design still uses the radius-$0.1$ initial-error set.

For R2, let $K=\lfloor T/\Delta t_y\rfloor=400$ and $m=2$. Because the process disturbance is deterministically bounded, the complete risk budget $\delta$ is assigned to the measurement channel. The componentwise mixture envelope is constructed from the worst mixture scale and a two-sided union bound over two mixture components, $m$ channels, and $K+1$ grid instants,
\begin{equation}
\bar v_{T,\mathrm{comp}}(\delta)
=\kappa\sigma\sqrt{2\log\!\left(\frac{4m(K+1)}{\delta}\right)}
=0.3529.
\end{equation}
Thus $\bar v_T=\sqrt m\,\bar v_{T,\mathrm{comp}}=0.4991$. For the selected ISS observer, the process-plus-held-noise contribution is
\begin{equation}
\sqrt{3}\,\bar w+\lVert L\rVert_2\bar v_T=0.9670,
\end{equation}
before the separately bounded held-output mismatch channel is added. The construction uses a conservative sub-Gaussian mixture-scale bound; for a pure Gaussian model an exact Gaussian quantile would be narrower. The simulation has $400$ measurement updates and $401$ state-grid points; using $K+1=401$ potential noise instants in the union bound also covers the terminal grid point and is therefore marginally conservative.

\subsubsection{Representative finite-horizon tubes}

Figure~\ref{fig:linear-overview} shows representative componentwise ISS tubes over the full $T=20$~s horizon. The left column uses the bounded R1 disturbance model, whereas the right column uses the contaminated R2 measurement model and its finite-horizon high-probability envelope. The widening in R2 reflects the stronger horizon-level envelope required to accommodate rare, large measurement errors; it should be read as a simultaneous robustness envelope with a stronger horizon-level interpretation than a pointwise Gaussian uncertainty band.

\begin{figure*}[pos=htbp]
\centering
\includegraphics[width=0.92\textwidth]{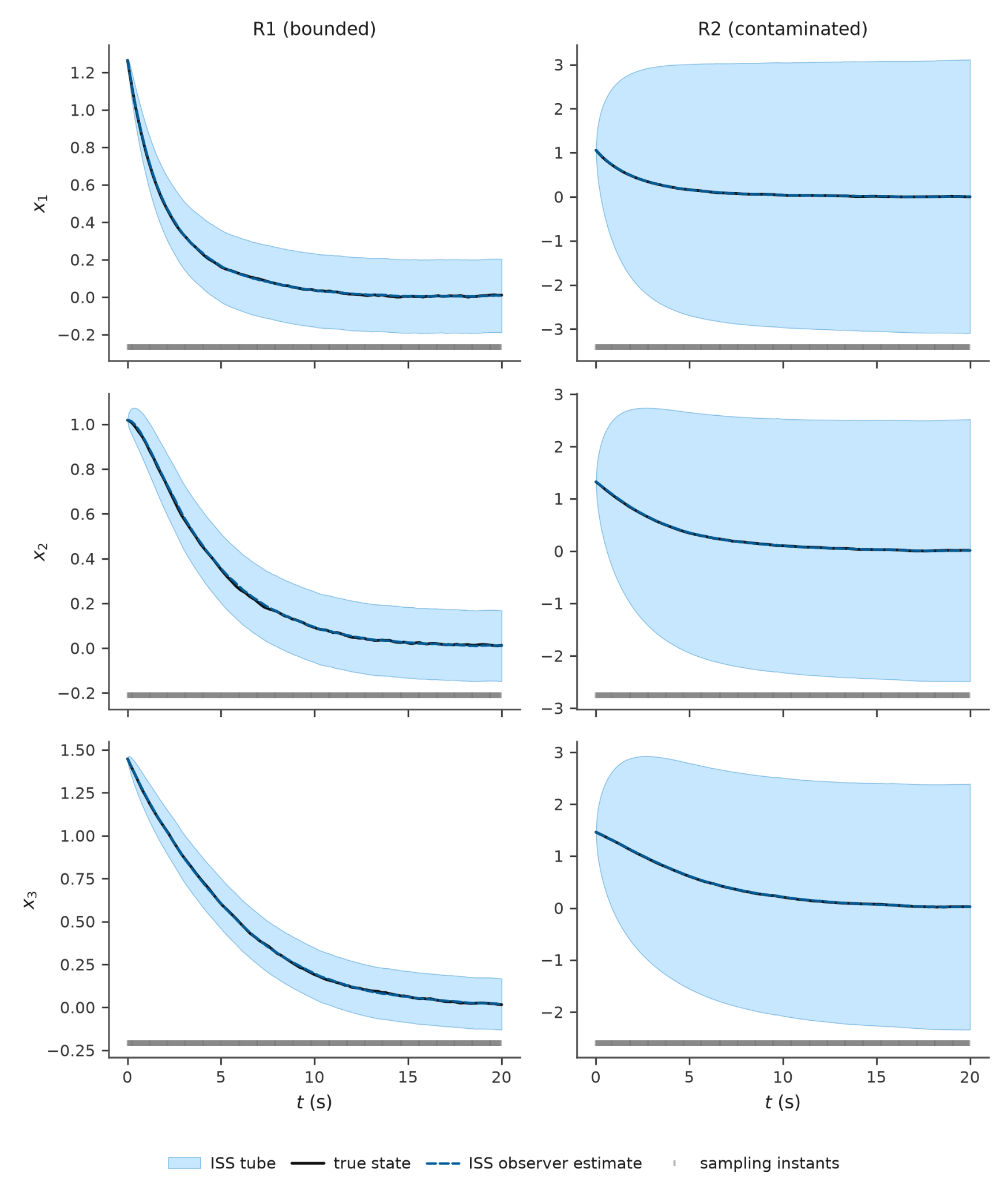}
\caption{Representative componentwise ISS tubes for the linear compartment model. Rows show the three states and columns show R1 and R2. Solid curves are true states, dashed curves are observer estimates, shading denotes the ISS tubes, and gray marks indicate sampling instants. The wider R2 tube results from the finite-horizon envelope for contaminated measurement noise.}
\label{fig:linear-overview}
\end{figure*}

\subsubsection{Tube-oriented linear design}

The co-design uses an initial-error ball of radius $r_0=0.1$ and the bounded R1 envelopes. Figure~\ref{fig:linear-codesign} shows a nonmonotone dependence of the width objective on the prescribed decay rate. The optimum occurs at $a=0.5$. The optimized normalized half-widths are approximately $(0.1370,0.1538,0.1538)$, whereas the baseline produces $(0.2226,0.1574,0.2174)$. The improvement is therefore a redistribution as well as a reduction of the worst componentwise bound.

The envelope decomposition is also informative. The staleness contribution $0.0396$ is slightly larger than the process contribution $0.0346$, while the direct measurement contribution $0.0038$ is much smaller. This result supports treating sampling staleness as an explicit design channel.

\begin{figure*}[pos=htbp]
\centering
\includegraphics[width=0.94\textwidth]{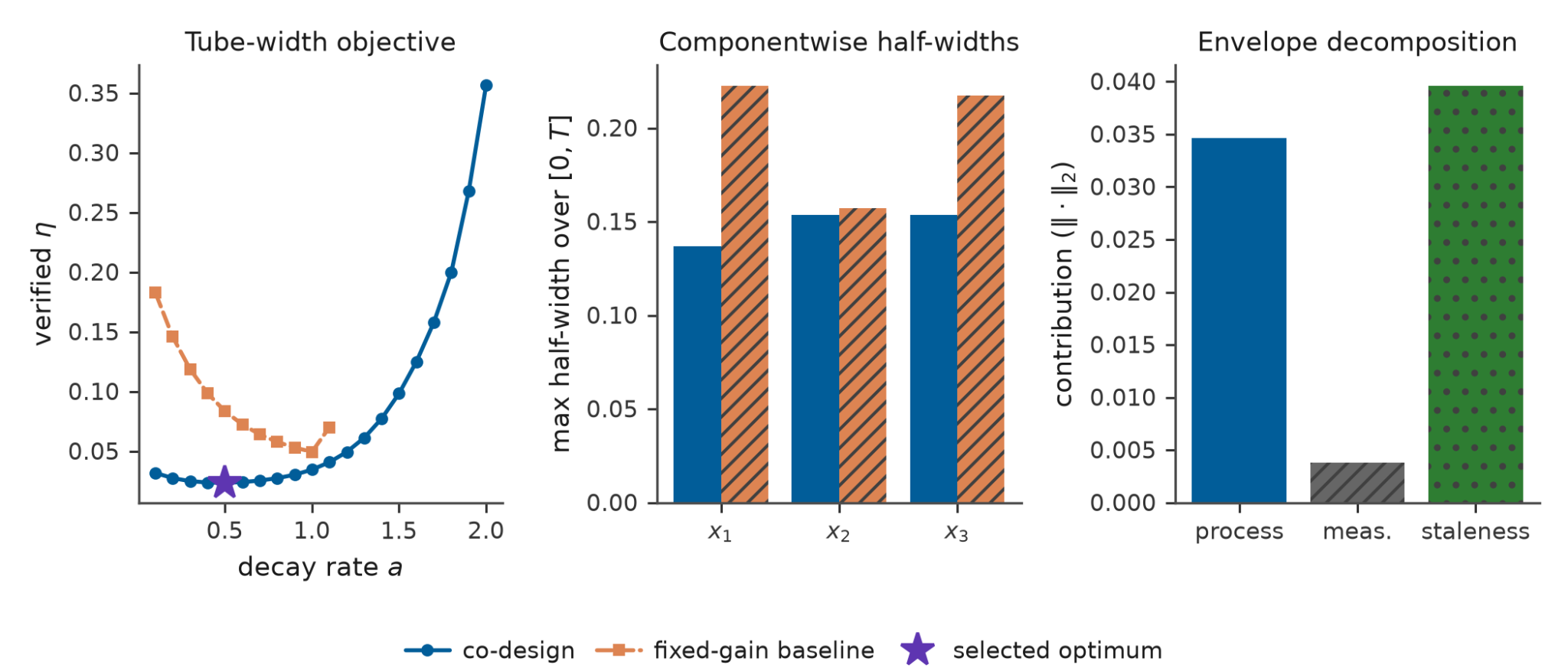}
\caption{Linear tube-oriented observer co-design under the bounded R1 envelopes. The left panel shows the verified tube-width objective over the prescribed decay-rate search for the co-designed and fixed-gain observers; the star marks the selected optimum at $a=0.5$. The middle panel compares the resulting maximum componentwise half-widths over $[0,T]$. The right panel decomposes the disturbance envelope into process, direct-measurement, and sample-and-hold staleness contributions, with staleness slightly exceeding the process contribution.}
\label{fig:linear-codesign}
\end{figure*}

\subsubsection{Estimator comparison over 1000 trajectories}

Table~\ref{tab:linear-comparison-metrics} summarizes the $N=1000$, $T=20$ comparison. In R1, the Kalman filter has the lowest aggregate normalized RMSE, $0.00338$, followed by the ISS observer at $0.00385$ and Gaussian MHE at $0.00712$. In R2, the ISS observer has the lowest aggregate normalized RMSE, $0.00422$, while the Kalman filter, Huber MHE, and Gaussian MHE attain $0.00565$, $0.00890$, and $0.01393$, respectively. Relative to Gaussian MHE in R2, Huber MHE reduces aggregate normalized RMSE by $36.1\%$ and increases mean pointwise inclusion from $0.917$ to $0.968$, with essentially unchanged mean band width.

None of the pointwise Kalman or MHE bands jointly includes all states at all recorded times over the complete horizon. The ISS tube contains all $1000$ trajectories in both regimes; the corresponding $95\%$ Wilson interval for the unknown trajectory-containment probability is $[0.9962,1]$. The stronger simultaneous statement is accompanied by much greater width. In R1, the mean ISS width is $0.319$, compared with $0.0230$ for the Kalman band and $0.0471$ for Gaussian MHE. In R2, the finite-horizon contaminated-noise envelope increases the mean ISS width to $4.853$, while the pointwise Kalman and MHE widths remain approximately $0.0230$ and $0.0471$. This contrast quantifies the conservatism introduced by the horizon-level envelope.

\begin{table*}[pos=htbp]
\centering
\caption{Linear estimator and uncertainty metrics for $N=1000$ trajectories over $T=20$~s. Aggregate normalized RMSE is the root mean square of the three statewise normalized RMSE values; the linear study uses unit state scales. Mean pointwise inclusion and mean width are averaged over the three state components. ``Joint trajectory inclusion'' is the fraction of trajectories for which all state components remain inside the reported band or tube at every recorded time. For the ISS observer, inclusion denotes simultaneous tube containment under the corresponding disturbance-envelope assumption.}
\label{tab:linear-comparison-metrics}
\small
\setlength{\tabcolsep}{5pt}
\begin{tabular}{@{}llrrrr@{}}
\toprule
Regime & Estimator & agg. nRMSE & mean ptwise incl. & joint traj. incl. & mean width \\
\midrule
R1 & continuous-discrete KF & 0.00338 & 0.991 & 0.000 & 0.0230 \\
R1 & Gaussian MHE           & 0.00712 & 0.972 & 0.000 & 0.0471 \\
R1 & ISS observer           & 0.00385 & 1.000 & 1.000 & 0.319 \\
\addlinespace
R2 & continuous-discrete KF & 0.00565 & 0.950 & 0.000 & 0.0230 \\
R2 & Gaussian MHE           & 0.01393 & 0.917 & 0.000 & 0.0471 \\
R2 & robust Huber MHE       & 0.00890 & 0.968 & 0.000 & 0.0472 \\
R2 & ISS observer           & 0.00422 & 1.000 & 1.000 & 4.853 \\
\bottomrule
\end{tabular}

\end{table*}

The positivity constraints are active in approximately $15.8\%$--$30.7\%$ of the MHE state-update records, depending on state and regime, so the inverse-Hessian bands must be interpreted as local curvature approximations affected by the active constraints. OSQP returns a non-success status for $3713/400000$ R1 Gaussian updates ($0.928\%$), $3300/400000$ R2 Gaussian updates ($0.825\%$), and $61/400000$ R2 Huber updates ($0.0153\%$). In every such case the best available iterate is retained and included in all metrics. In serial execution on an Apple M2 Ultra with numerical-library threads restricted to one, the corrected MHE computation requires approximately $0.626$~s per R1 Gaussian trajectory, $0.630$~s per R2 Gaussian trajectory, and $2.001$~s per R2 Huber trajectory. These times refer to the MHE computation alone.

\subsection{Nonlinear flexible-joint benchmark}
\label{sec:nonlinear-study}

\subsubsection{Structured benchmark and regional staleness model}

The nonlinear benchmark follows the single-link flexible-joint example associated with Lipschitz observer design \cite{rajamani1998observers}. The state ordering is
\begin{equation}
x=[\theta_m,\ \omega_m,\ \theta_\ell,\ \omega_\ell]^\top,
\end{equation}
and the model is represented in the structured form \eqref{eq:structured-plant} with
\begin{align}
B_\phi&=\begin{bmatrix}0&0&0&-3.33\end{bmatrix}^\top,\\
H_\phi&=\begin{bmatrix}0&0&1&0\end{bmatrix},\\
\phi(s)&=\sin s.
\end{align}
Both angles are measured,
\begin{equation}
C=\begin{bmatrix}
1&0&0&0\\
0&0&1&0
\end{bmatrix},
\end{equation}
and the known excitation is $u(t)=0.5\sin(2t)$. The sampling period is $\Delta t_y=0.01$~s. Process and measurement envelope norms are $\bar w_T=0.04$ and $\bar v_T=0.0141$.

The output-staleness envelope is regional. The design assumes that the measured-channel velocity remains in the regional set
\begin{equation}
    |\dot\theta_m(t)|\leq \Omega_m,
    \qquad
    |\dot\theta_\ell(t)|\leq \Omega_\ell,
    \qquad t\in[0,T],
\end{equation}
with $\Omega=(6.758,4.750)$~rad/s. Under this regional assumption,
\begin{align}
M_y&=\sqrt{\Omega_m^2+\Omega_\ell^2}=8.260,\\
\bar\zeta_h&=M_y\Delta t_y=0.0826.
\end{align}
These quantities were calibrated from a 30-run preliminary study with a $1.2$ safety factor. They are therefore calibrated regional quantities, not global analytical bounds. The theorem-level statement is conditional on the trajectory remaining in this measured-channel velocity region.

The initial-error set is the Euclidean ball $\|e_0\|\leq0.5$. Reference scales are $s=(1,8,1,8)$, with angle scales in radians and velocity scales in radians per second. Sensitivity artifacts archived in the replication package show that the objective is effectively flat once the velocity reference scale reaches $8$ \cite{baranowski2026replication}.

\subsubsection{Structured contraction condition}

The structured LMI is feasible for the two-angle output configuration and yields the independently verified contraction rate $\alpha=3.62$. The reconstructed maximum LMI eigenvalue is $-1.4\times10^{-4}$ and $\operatorname{cond}(P)=15.08$. The tested motor-angle-only and full-norm constant-metric sufficient conditions were infeasible. This feasibility outcome is limited to those sufficient conditions and leaves open other observers, metrics, and local conditions.

\begin{figure*}[pos=htbp]
\centering
\includegraphics[width=0.94\textwidth]{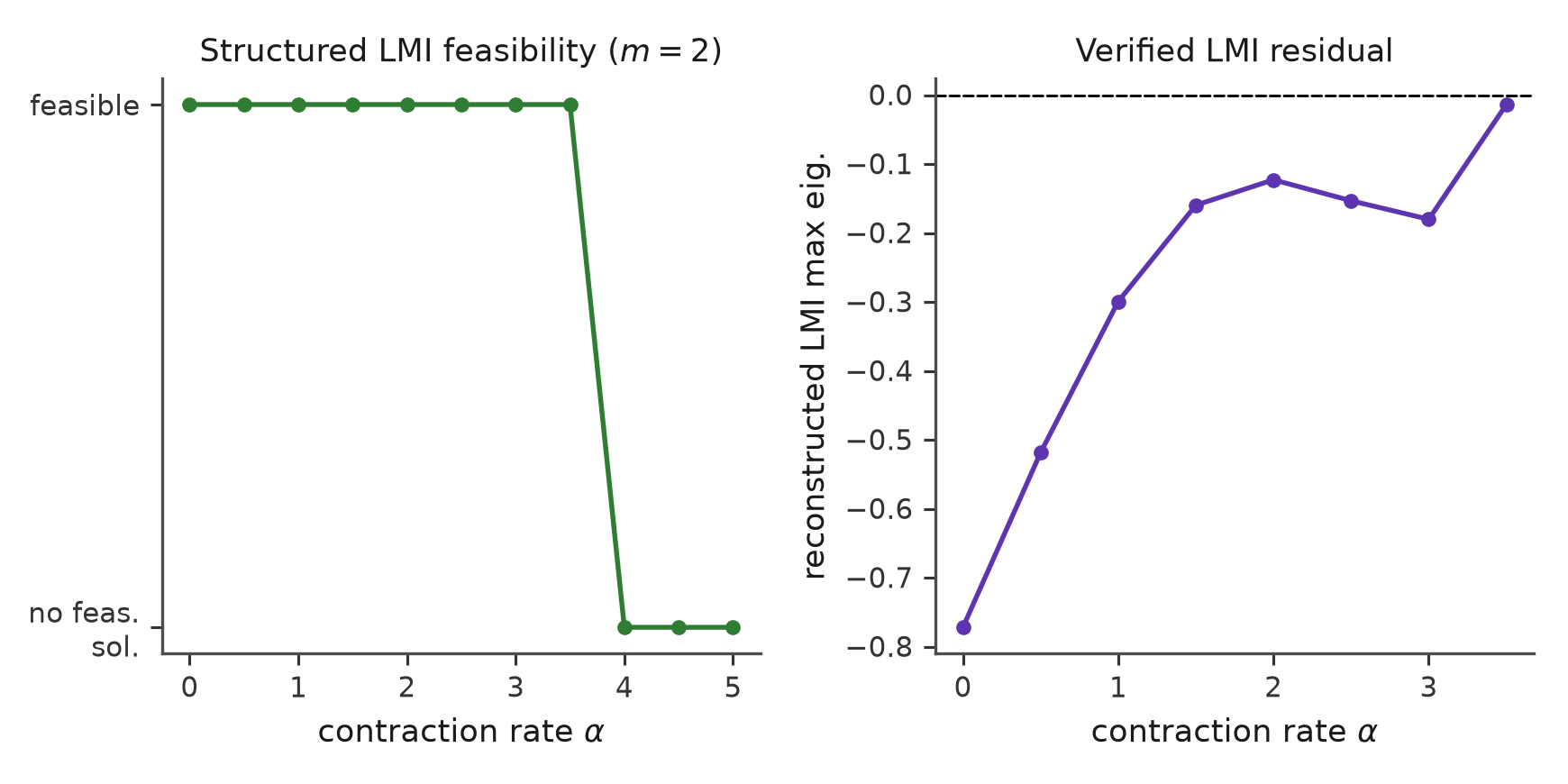}
\caption{Structured contraction study for the two-output configuration. The left panel shows feasibility over the rate sweep; the right panel shows the maximum eigenvalue of the reconstructed LMI. Negative values satisfy the tested inequality, and the dashed line marks the feasibility boundary.}
\label{fig:nonlinear-contraction}
\end{figure*}

\subsubsection{Nonlinear tube co-design}

The conventional contraction-oriented baseline uses a minimum-gain design with target contraction rate $2$, $\|L\|=64.0$, and worst normalized half-width $J_{\rm old}=11.31$. The tube-oriented design searches over the decay rate and selects $a=1.333$, with $\eta_{\rm verified}=0.2556$, $\|L\|=90.1$, and reconstructed block-LMI maximum eigenvalue $-3.0\times10^{-7}$. The resulting worst normalized half-width is $J_{\rm new}=0.506$.

The ratio
\begin{equation}
R_J=\frac{J_{\rm new}}{J_{\rm old}}=0.0447
\end{equation}
corresponds to an approximately $22.4$-fold reduction in the worst normalized objective. A rate-matched calculation at the baseline dissipation rate $a=1.0$ gives a co-designed objective $0.5057$ and an improvement factor $22.37$, confirming that the reported reduction is due to tube-oriented co-design and not to the selected decay rate. This factor concerns the worst normalized objective. The angle bounds show the largest relative improvement, while the normalized velocity bounds reflect their larger reference scales.

\begin{figure*}[pos=htbp]
\centering
\includegraphics[width=0.94\textwidth]{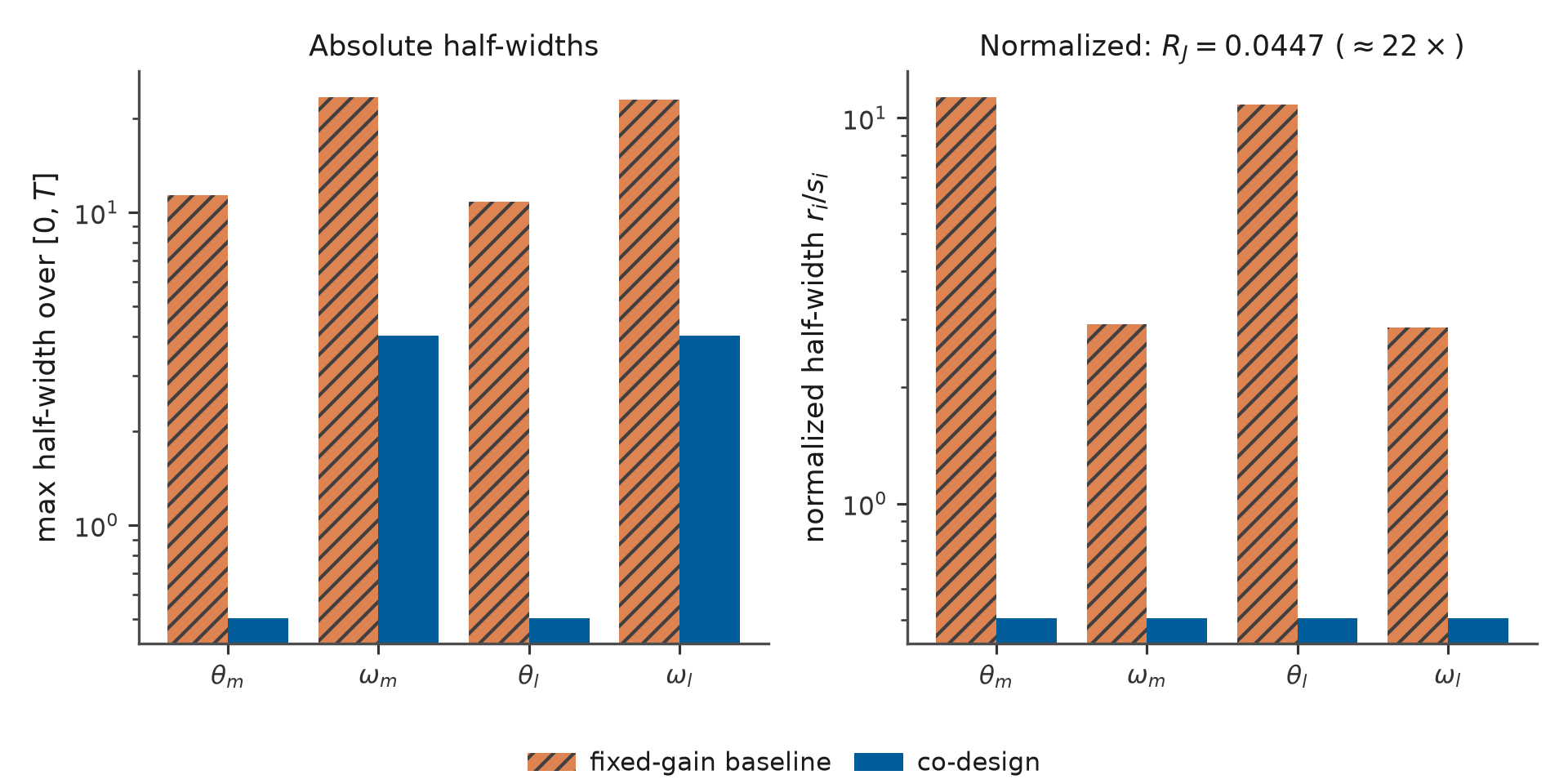}
\caption{Nonlinear flexible-joint tube-design comparison. The figure reports maximum componentwise half-widths for the contraction-oriented baseline and the tube co-design in the nonlinear benchmark. The left panel shows absolute values; the right panel shows values normalized by the specified state scales. The ratio $R_J=0.0447$ compares the worst normalized half-widths.}
\label{fig:nonlinear-improvement}
\end{figure*}

\subsubsection{Containment, utilization, and estimator behavior}

The bounded R1 Monte Carlo validation uses $N=\NonlinearValidationN$ trajectories. Joint simultaneous containment is $1.0$, no operating-region violation is observed, and the largest output-rate ratio is $\max_t\|\dot y(t)\|_2/M_y=0.706$. A separate deterministic stress test evaluates $195$ boundary-oriented initial-error and extremal disturbance cases. Its largest output-rate ratio is $0.605$, its largest held-output-mismatch ratio is $0.716$, and it produces no region or tube violation. The random study is more adverse in output rate than this structured test, so the latter is supporting evidence rather than a worst-case proof. Neither result converts the calibrated $M_y$ into a global analytical bound.

Table~\ref{tab:nonlinear-summary} collects the principal nonlinear design and validation quantities. The disturbance-channel coefficients show that the staleness channel has the largest coefficient in the optimized dissipation inequality, consistent with the sampling-interval sensitivity artifacts archived in the replication package \cite{baranowski2026replication}.

\begin{table*}[pos=htbp]
\centering
\caption{Selected nonlinear tube-design and bounded-disturbance validation metrics. The independently verified refined contraction rate $\alpha=3.62$ and the corresponding LMI residual sweep are reported in Figure~\ref{fig:nonlinear-contraction}.}
\label{tab:nonlinear-summary}
\small
\setlength{\tabcolsep}{4pt}
\begin{tabular}{@{}llr@{}}
\toprule
Category & Quantity & Value \\
\midrule
Tube design & decay rate $a$ & $1.333$ \\
            & verified $\eta$ & $0.2556$ \\
            & $\lVert L\rVert_2$ & $90.11$ \\
            & baseline objective $J_{\rm old}$ & $11.313$ \\
            & co-designed objective $J_{\rm new}$ & $0.5056$ \\
            & free-rate improvement factor & $22.38$ \\
            & fixed-rate factor at $a=1$ & $22.37$ \\
\addlinespace
Validation  & random trajectories & $500$ \\
            & joint simultaneous containment & $1.000$ \\
            & structured stress cases & $195$ \\
            & region / tube violations & $0/0$ \\
            & max. random output-rate ratio & $0.706$ \\
            & max. stress output-rate ratio & $0.605$ \\
            & maximum normalized utilization & $0.986$ \\
\addlinespace
Regional model & output-rate bound $M_y$ & $8.260$ \\
               & process coefficient $\mu_w$ & $0.0117$ \\
               & measurement coefficient $\mu_v$ & $0.193$ \\
               & staleness coefficient $\mu_h$ & $1.128$ \\
\bottomrule
\end{tabular}

\end{table*}

The maximum normalized utilization is $0.986$ and occurs at $t=0$ in $\theta_m$, where the sampled initial error nearly saturates the prescribed radius-$0.5$ initial set. One decay time for the selected design is $1/a=0.75$~s; after this time the maximum utilization falls to $0.213$. Thus co-design makes the tube usable, while the maximum-utilization result primarily reflects initial-set dominance; post-transient tightness should be assessed separately.

\begin{figure*}[pos=htbp]
\centering
\includegraphics[width=0.92\textwidth]{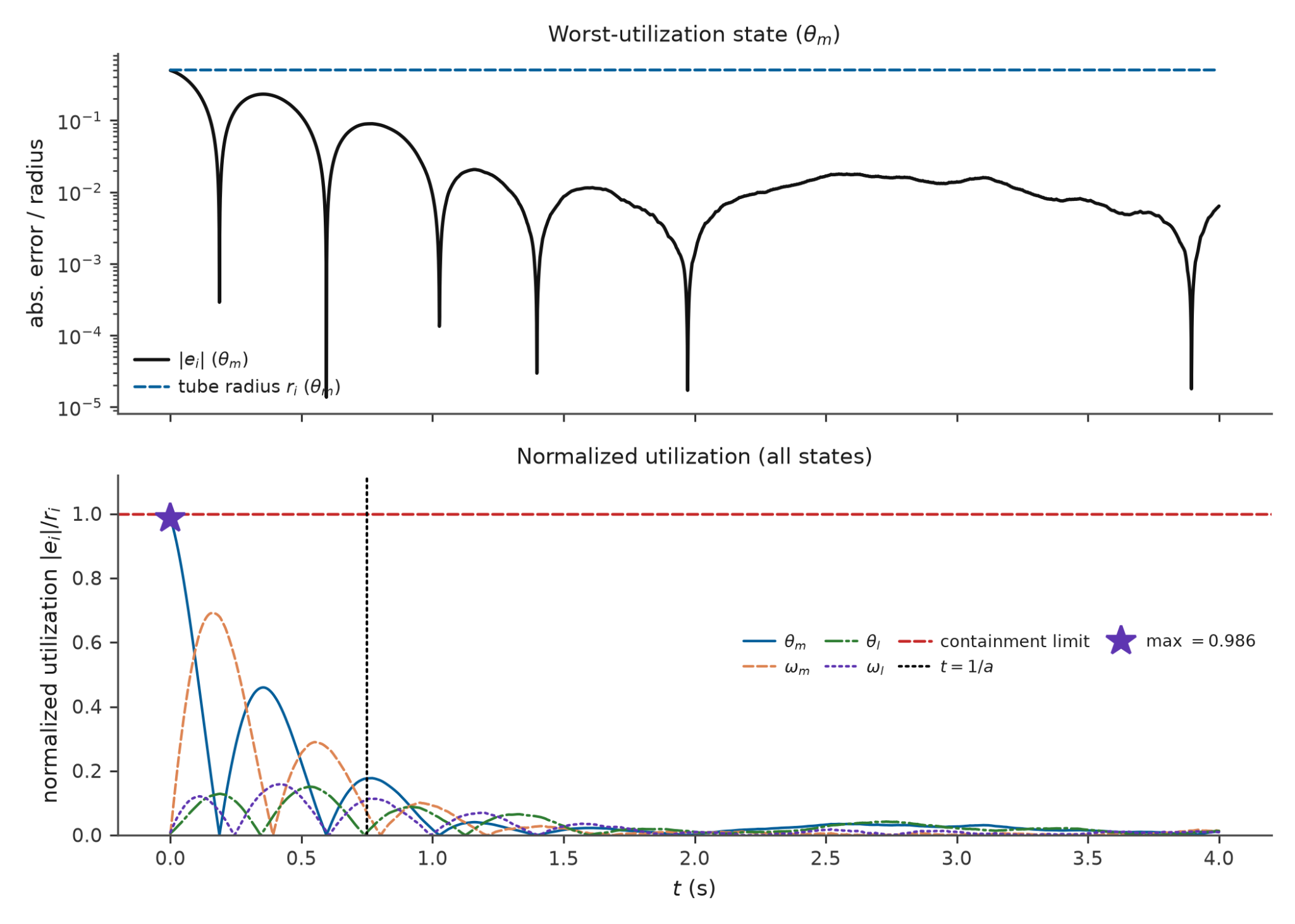}
\caption{Nonlinear ISS-tube utilization for the worst Monte Carlo realization. The upper panel compares $|e_{\theta_m}|$ with its radius; the lower panel shows $|e_i|/r_i$ for all states. The star marks the maximum $0.986$ at $t=0$, and the vertical line marks one decay time, $1/a=0.75$~s.}
\label{fig:nonlinear-utilization}
\end{figure*}

\begin{figure*}[pos=htbp]
\centering
\includegraphics[width=0.92\textwidth]{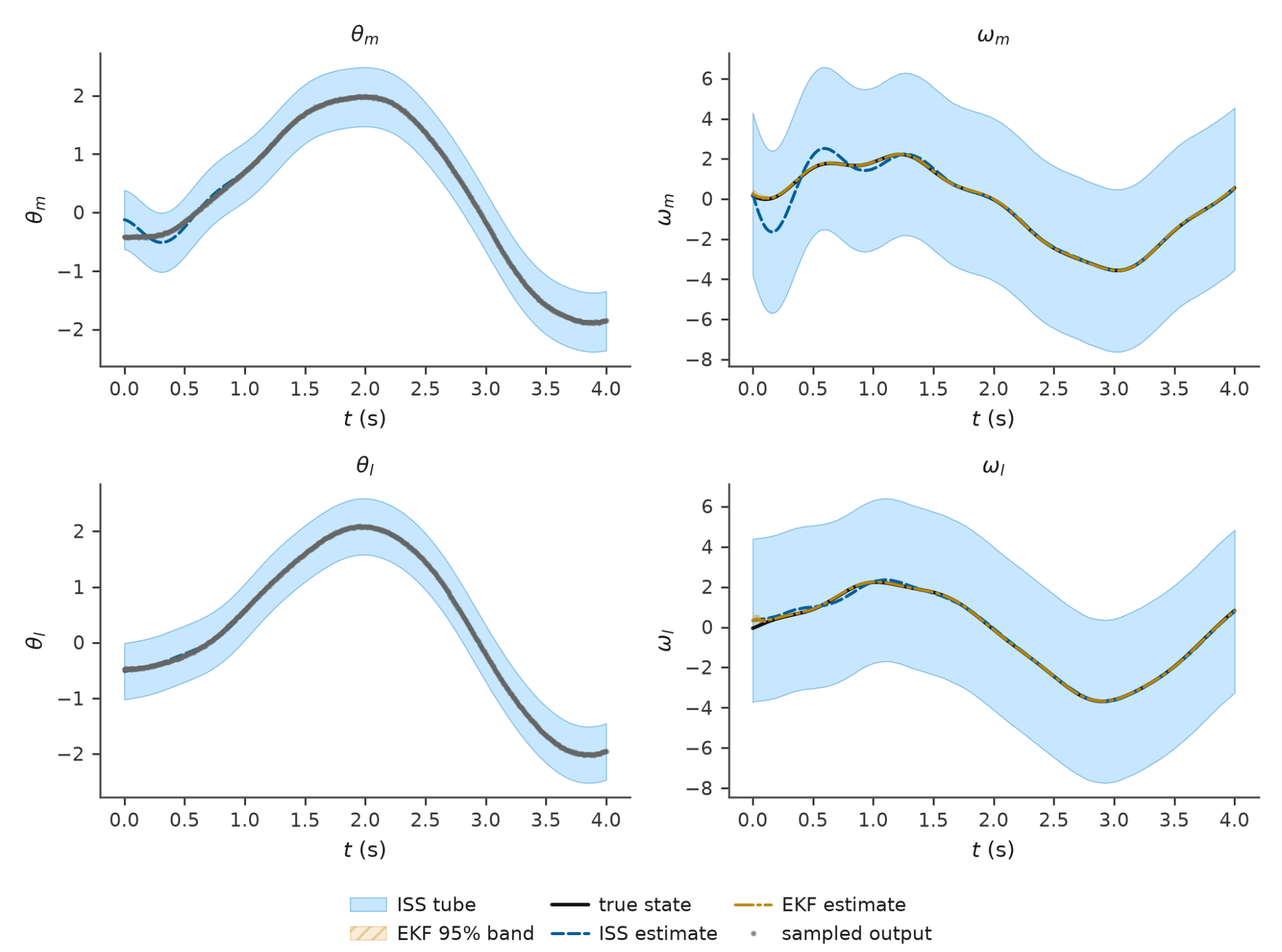}
\caption{Representative nonlinear estimation trajectory for the flexible-joint benchmark. The four panels show the motor angle $\theta_m$, motor velocity $\omega_m$, link angle $\theta_\ell$, and link velocity $\omega_\ell$. Solid black curves denote the true states, dashed blue curves the sampled-data ISS-observer estimates, and dash-dotted curves the EKF estimates. The blue shaded regions are the componentwise ISS tubes, while the hatched regions are the EKF pointwise $95\%$ bands. Sampled outputs are shown for the two measured angles.}
\label{fig:nonlinear-estimator-comparison}
\end{figure*}

\subsection{Cross-study interpretation}
\label{subsec:cross-study}

The two numerical studies evaluate complementary parts of the same pipeline. The linear compartment example isolates the probability semantics: the KF and MHE bands are narrow pointwise uncertainty descriptions, whereas the ISS construction gives a simultaneous finite-horizon tube. This is why the ISS tube has perfect trajectory inclusion in the reported Monte Carlo study while being substantially wider, especially under the contaminated R2 envelope. The same example also shows that optimizing the observer for tube width changes the engineering quantity being minimized together with the nominal convergence rate.

The flexible-joint example tests whether the same tube-oriented design logic survives a nonlinear observer condition. The structured LMI uses the known location of the sine nonlinearity while avoiding a full-state Lipschitz overbound. This makes the constant-metric co-design feasible and produces a much smaller normalized tube. The linear study therefore demonstrates the finite-horizon uncertainty interpretation and estimator comparison, while the nonlinear study demonstrates that the design principle extends beyond the linear compartment model. Together they support the central claim: sampled-data estimation should account explicitly for intersample mismatch when the desired uncertainty statement is simultaneous over a finite horizon.

\section{Discussion}
\label{sec:discussion}

\subsection{Interpretation of the numerical studies}

The two numerical studies evaluate different parts of the same sampled-data tube construction. The linear compartment example isolates the finite-horizon probability semantics, the influence of the contaminated-noise envelope, and the comparison with Kalman and MHE uncertainty bands. The flexible-joint example tests the same design logic in a nonlinear setting where the contraction condition uses the known structure of the nonlinear channel. In both studies, the tube is evaluated as a simultaneous containment object over the complete horizon, while KF, EKF, and MHE bands are used as pointwise estimator-side comparators.

The results also show where conservatism enters. Under nominal Gaussian noise, pointwise filters and MHE bands are much narrower than the ISS tube. Under contaminated measurements, the robust Huber MHE improves pointwise behavior relative to Gaussian MHE, while the ISS tube remains a horizon-level robustness statement driven by the prescribed worst-scale envelope. The comparison therefore separates two tasks: local statistical uncertainty reporting and finite-horizon containment under explicit disturbance envelopes.

\subsection{Practical interpretation for sampled-data monitoring}

The held-output mismatch appears as an additional input channel in the sampled-data error dynamics. Its contribution depends on the observer gain, the sampling period, and the output-rate bound. This channel is easy to overlook when a continuous-time observer is written formally with sampled measurements, but it becomes explicit in the tube radius and in the co-design problem. Reducing the observer gain may attenuate sampled measurement noise and staleness, while increasing the gain may improve nominal contraction. The proposed metric and gain optimization balances these effects for the selected horizon, component scales, and disturbance envelope.

\subsection{Limitations and scope}

The guarantees are finite-horizon and conditional on the specified disturbance envelope. Structural model errors outside the aggregated disturbance channels are outside the stated assumptions. In R2, the worst-scale Gaussian-mixture tail bound and the union bound provide a transparent simultaneous event, with visible conservatism relative to pointwise stochastic intervals. Dependence-aware extreme-value bounds could reduce this conservatism while preserving the deterministic ISS implication.

The flexible-joint result is regional because the output-rate quantity $M_y$ and the resulting staleness bound are calibrated for a prescribed operating region. Zero region and tube violations in $500$ random trajectories and $195$ structured boundary-oriented cases support the numerical use of this assumption, but they do not turn the calibrated value into a global bound. The result should therefore be read as a regional sampled-data design with empirical stress testing, while an analytical regional velocity bound remains future work.

For constrained MHE, the inverse Hessian describes local curvature around the optimizer and is affected by active positivity constraints. The resulting Laplace bands are therefore local uncertainty diagnostics for comparison with the ISS tube. Experimental deployment, embedded timing assessment, state-dependent contraction metrics, and joint optimization of the observer and sampling policy remain outside the present study.

\section{Conclusions}
\label{sec:conclusions}

This paper develops uniform finite-horizon ISS tubes for continuous-time state estimation with sampled and held measurements. One horizon-level disturbance-envelope event yields simultaneous containment of the complete estimation-error trajectory. Quadratic dissipation inequalities provide ellipsoidal, Euclidean, and componentwise bounds, while the sampled-data error model separates process disturbance, held measurement noise, and intersample output mismatch. The resulting quantities can be evaluated online once the observer and metric have been designed offline.

The tube also serves as a practical design objective. In the linear compartment benchmark, joint gain-metric design reduces the worst normalized componentwise half-width from $0.2226$ to $0.1538$. In the nonlinear flexible-joint benchmark, the structured S-procedure condition is feasible for two measured angles and gives a verified contraction rate $\alpha=3.62$. The co-designed tube improves the normalized half-width by approximately a factor of $22.4$, with a rate-matched factor of $22.37$. All $500$ random trajectories and $195$ structured stress cases remain inside the nonlinear tube under the specified regional assumption.

The $N=1000$, $T=20$ linear study illustrates the operational distinction between pointwise uncertainty and complete-trajectory containment. The ISS tube contains all tested trajectories in both regimes, at the cost of greater width, especially under the conservative contaminated-noise envelope. Under R2 contamination, robust Huber MHE improves pointwise inclusion and aggregate normalized RMSE relative to Gaussian MHE. These results support the use of the ISS tube as a robustness-oriented monitoring bound alongside statistical estimator uncertainty.

The scope remains finite-horizon and model dependent, and the nonlinear guarantee is regional. Future work will focus on less conservative disturbance envelopes, analytical regional output-rate bounds, state-dependent contraction metrics, sampling-policy co-design, and experimental validation in embedded monitoring systems.

\section*{Acknowledgments}
The author acknowledges the scientific influence and guidance of Professor Wojciech Mitkowski.

\section*{Funding}
This work was financed by AGH University of Krakow through its subvention for scientific research.

\section*{Declaration of competing interest}
The author declares that he has no known competing financial interests or personal relationships that could have appeared to influence the work reported in this paper.

\section*{Data and code availability}
The numerical examples use model-generated data. The source code, configuration files, deterministic seeds, generated numerical artifacts, verification records, and scripts required to reproduce the reported figures and tables are available in the public Zenodo replication package \cite{baranowski2026replication}, DOI: \href{https://doi.org/10.5281/zenodo.21010764}{10.5281/zenodo.21010764}.

\section*{Declaration of generative AI and AI-assisted technologies in the writing process}
During preparation of the manuscript, the author used OpenAI ChatGPT for language refinement and structural review. The author reviewed and edited all scientific content, numerical results, interpretations, and references and takes full responsibility for the content of the publication.

\sloppy
\bibliographystyle{elsarticle-num-names}
\bibliography{references}

\clearpage
\appendix
\section*{Appendix material}
\addcontentsline{toc}{section}{Appendix material}
\setcounter{section}{0}
\renewcommand{\thesection}{A.\arabic{section}}
\setcounter{equation}{0}
\renewcommand{\theequation}{A.\arabic{section}.\arabic{equation}}
\section{Auxiliary finite-horizon observations}
\label{app:auxiliary}

For a quadratic tube level
\begin{equation}
\rho(t)=e^{-at}\rho_0+(1-e^{-at})\rho_\infty,
\end{equation}
the derivative has the sign of $\rho_\infty-\rho_0$. Hence the maximum over $[0,T]$ is attained at $0$ when $\rho_0\geq\rho_\infty$ and at $T$ otherwise. This elementary fact produces the two normalization inequalities used in both co-design problems.

The componentwise projection follows from a constrained quadratic maximization. For $P\succ0$,
\begin{equation}
\max_{e^\top Pe\leq\rho}e_i^2=\rho(P^{-1})_{ii}.
\end{equation}
The Schur-complement inequality
\begin{equation}
\begin{bmatrix}P&e_i\\e_i^\top&\eta s_i^2\end{bmatrix}\succeq0
\end{equation}
is therefore equivalent to $(P^{-1})_{ii}\leq\eta s_i^2$.

\section{Linear co-design transformations}
\label{app:linear-codesign}

The substitution $Y=PL$ removes the bilinear product $PL$ from the nominal observer dynamics. For fixed $a$, the process, measurement, and staleness blocks are affine in $(P,Y)$. The normalized finite-horizon problem remains convex because the factors $e^{-aT}$ and $(1-e^{-aT})/a$ are constants in the inner problem. The outer rate search is therefore one-dimensional and does not alter the feasibility verification of an individual solution.

Common scaling of $(P,Y,b,\nu)$ by $c>0$ scales the Lyapunov level by $c$ and $P^{-1}$ by $c^{-1}$, leaving physical radii invariant. The normalization $\rho_T\leq1$ fixes this gauge and turns the componentwise radius objective into the LMI variable $\eta$.

\section{Structured nonlinear S-procedure}
\label{app:structured-derivation}

For the structured error model, let
\begin{equation}
\xi=[e^\top\ \Delta\phi^\top\ w^\top\ \bar v^\top\ \zeta_h^\top]^\top.
\end{equation}
Premultiplication and postmultiplication of the structured co-design LMI by $\xi^\top$ and $\xi$ yield
\begin{align}
0\geq{}&\dot V+aV-\mu_w\|w\|^2-\mu_v\|\bar v\|^2-\mu_h\|\zeta_h\|^2\nonumber\\
&+\tau\bigl(\gamma_\phi^2\|H_\phi e\|^2-\|\Delta\phi\|^2\bigr).
\end{align}
The final term is nonnegative under the structured incremental constraint; removing it gives the channel-wise dissipation inequality used in the main manuscript. This derivation also shows why the matrices $B_\phi$ and $H_\phi$ must remain in the LMI: replacing them by identity matrices recovers the more conservative full-norm sufficient condition that was not feasible for the tested benchmark.

\section{Sensitivity analyses}
\label{app:sensitivity}

This appendix examines how the nonlinear tube design changes under one-at-a-time variations of the sampling interval, state normalization, regional output-rate bound, and initial-error radius. Figure~\ref{fig:staleness-sensitivity} isolates the role of the sampling interval. Because $\bar\zeta_h=M_yh$, the held-output mismatch envelope grows linearly with $h$. Re-optimizing the observer partly compensates for this increase near the nominal value $h=0.01$~s, but the tube objective rises more rapidly for larger intervals, and the velocity-component half-widths remain the dominant absolute bounds. This confirms that output staleness is not merely a bookkeeping term but a material design limitation.

Figure~\ref{fig:assumption-sensitivity} addresses the remaining specified choices. Increasing the velocity reference scale from $k=4$ to $k=6$ substantially reduces the normalized objective, whereas the curve is essentially flat for $k\geq8$; the nominal choice $k=8$ therefore lies in the insensitive region. Varying $M_y$ around its nominal value produces only a modest change in the optimized objective, which supports the numerical robustness of the reported design without changing the regional status of the bound. By contrast, the objective grows approximately proportionally with the initial-error radius $r_0$, showing that the prescribed initial set is a primary determinant of the worst finite-horizon radius. These sweeps vary one quantity at a time and should not be interpreted as a joint robustness analysis.

\begin{figure*}[pos=htbp]
\centering
\includegraphics[width=0.94\textwidth]{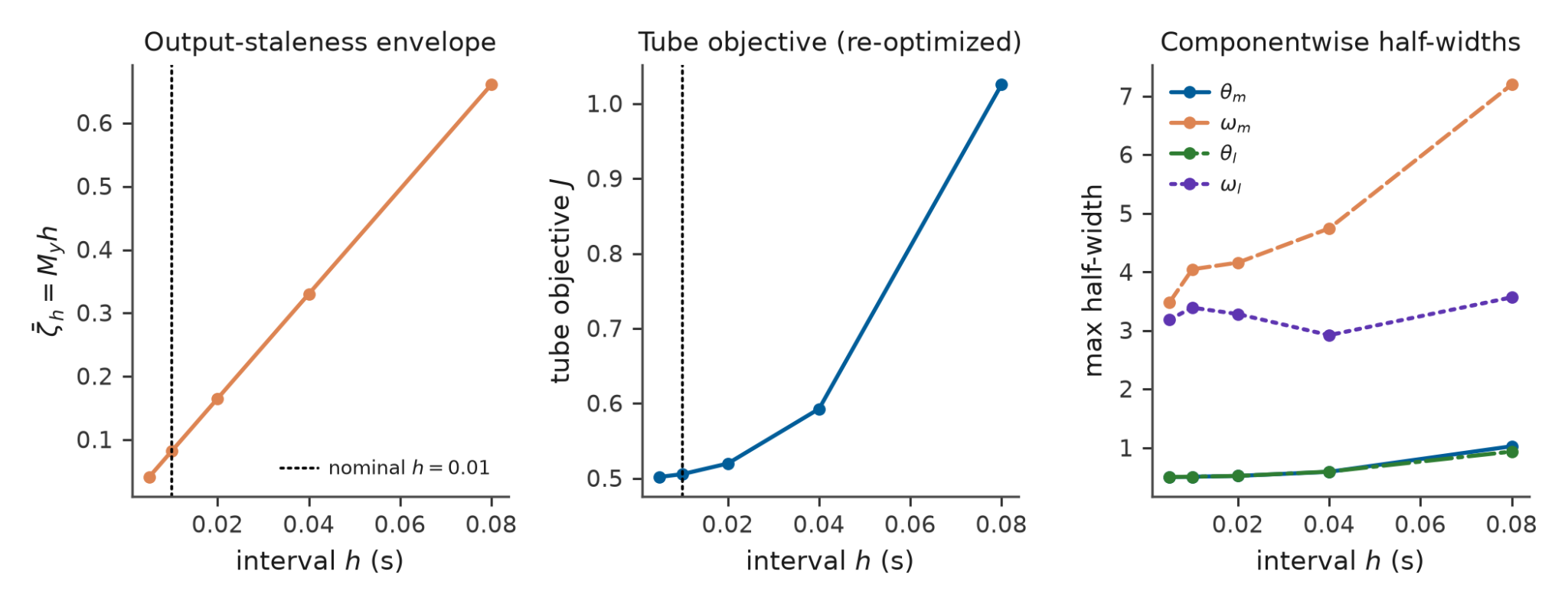}
\caption{Sensitivity of the nonlinear tube design to the sampling interval $h$, with the observer re-optimized for each value. The left panel shows the linear growth of the output-staleness envelope $\bar\zeta_h=M_yh$. The middle panel shows the corresponding optimized tube objective, and the right panel reports the componentwise maximum half-widths. The dotted vertical line marks the nominal interval $h=0.01$~s.}
\label{fig:staleness-sensitivity}
\end{figure*}

\begin{figure*}[pos=htbp]
\centering
\includegraphics[width=0.94\textwidth]{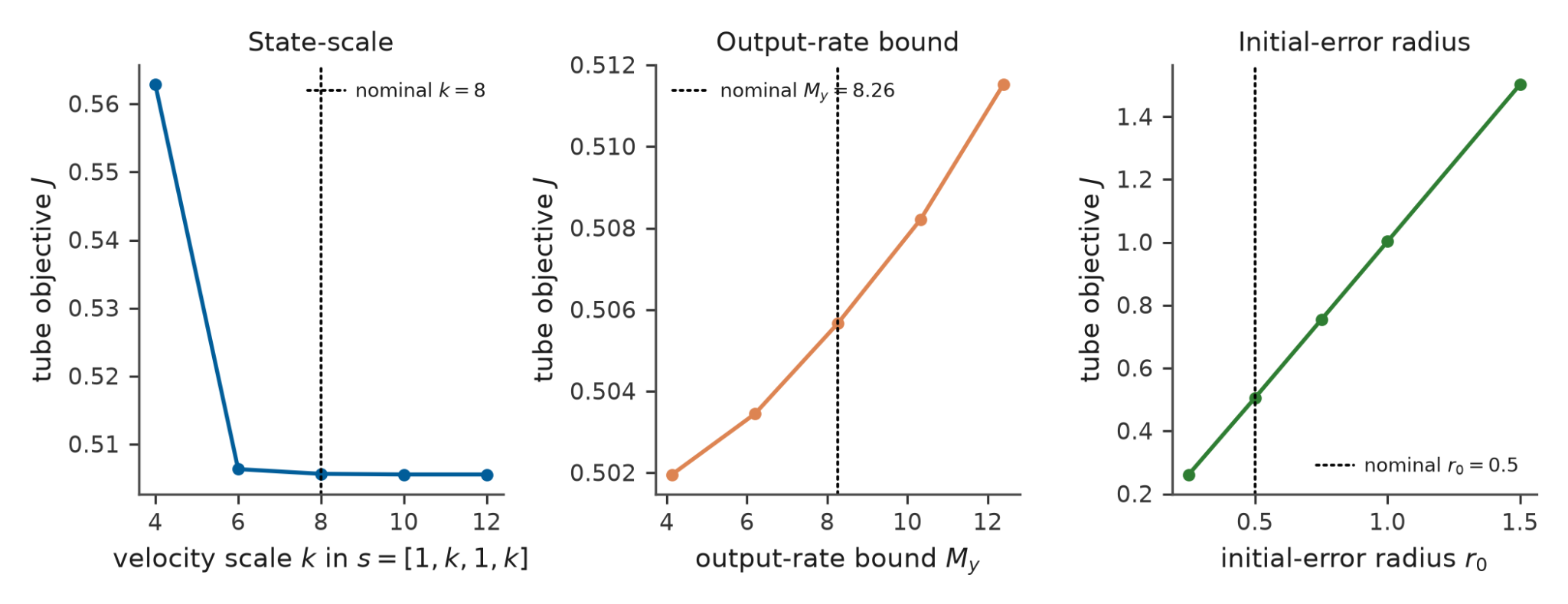}
\caption{Sensitivity of the nonlinear tube objective to selected modeling and normalization assumptions. The left panel varies the velocity reference scale $k$ in $s=(1,k,1,k)$ and shows that the objective is nearly constant once $k\geq8$. The middle panel varies the regional output-rate bound $M_y$ and shows weak sensitivity around the nominal value $8.26$. The right panel varies the initial-error radius $r_0$, for which the optimized objective grows approximately linearly. Dotted lines mark the nominal choices.}
\label{fig:assumption-sensitivity}
\end{figure*}

\section{Reproducibility protocol}
\label{app:reproducibility}
The public Zenodo replication package contains fixed configuration files, deterministic seed registries, solver and environment metadata, independently reconstructed residuals, and scripts that regenerate all reported figures and tables \cite{baranowski2026replication}. Full gains, Lyapunov matrices, and detailed LMI residual arrays are retained in machine-readable artifacts rather than typeset in the paper. The workflow preserves the run-index-to-seed mapping. The linear simulation uses $401$ state-grid points and $400$ measurement updates per trajectory; the corrected MHE implementation uses a growing horizon during startup and a time-aligned shifted arrival state once the window begins to slide. The archived record is available at DOI: \href{https://doi.org/10.5281/zenodo.21010764}{10.5281/zenodo.21010764}.

\end{document}